\newcommand\warning[1]{\textcolor{black}{#1}}
\theoremstyle{plain}
\newtheorem{thm}{Theorem}[section]
\newtheorem{lem}[thm]{Lemma}
\newtheorem{cor}{Corollary}[section]
\theoremstyle{definition}
\theoremstyle{remark}
\newtheorem{rem}{Remark}[section]
\numberwithin{equation}{section}
\numberwithin{figure}{section}
\title{Cohomological invariants of Abelian symplectic quotients of pure $r$-qubits}
\author[1]{Saeid Molladavoudi}
\date{\vspace{-5ex}}
\begin{document}

\maketitle

\begin{abstract} In this paper, we study cohomology rings and cohomological pairings over Abelian symplectic quotients of special Hamiltonian tori manifolds. The Hamiltonian group actions appear in quantum information theory where the tori are maximal tori of some compact semi-simple Lie groups, so called the Local Unitary groups, which act effectively and with non-trivial characters on specific complex projective varieties. The complex projective \warning{spaces} are in fact spaces of pure multi-partite quantum states. By studying the geometry of associated moment polytopes, we explicitly obtain cohomology rings, in terms of elementary symmetric functions, and a recursive wall-crossing algorithm to compute cohomological pairings over the corresponding Abelian symplectic quotients. We propose algorithms for general $r$-qubit case and elaborate discussions with explicit examples for the cases with $r=2,3$.
\end{abstract}

Keywords: Equivariant Cohomology; Symplectic Reduction; Multi-party Quantum States 


\section{Introduction} \label{introduction}

The study of multi-particle quantum correlations (entanglement) in quantum information theory is directly related to the moment map formalism in symplectic geometry (see \cite{benvegnu2004,christandl2014,saeid2015a,sawicki2011a} for example). Recall that the symmetry group of a composite quantum system containing $r$ isolated finite-dimensional sub-systems (spin systems for instance) in its pure state is denoted by the Local Unitary (LU) group, $K = SU(2) \times \cdots \times SU(2) \equiv SU(2)^r$, which is a semi-simple, compact Lie group acting in a Hamiltonian fashion on the complex projective space $M=\mathbb{P}_n$ of pure global states, such that $2^r = n+1$ \cite{ashtekar1995,brody2001}. The Lie group $K$ is a subgroup of Special Unitary group $SU(n+1)$. Therefore, there exists an equivariant moment map $\mu: \mathbb{P}_n \rightarrow \mathfrak{k}^*$, where $\mathfrak{k}^* \cong \oplus_{r} \mathfrak{su}^*(2)$ denotes the dual of the Lie algebra $\mathfrak{k}$ of the Lie group $K$, such that \cite{saeid2012b,sawicki2011b}
\begin{equation} \label{non-abelian-moment-map-eq}
\mu(p) = \bigoplus_{j=1}^{r}(\rho_{j}-\frac{1}{n_j}\mathbb{1}_{n_j}), \quad \forall p \in \mathbb{P}_n,
\end{equation}
where $\rho_j \in \mathfrak{su}(2)^*$ is the density matrix of $j$-th sub-system, or the $j$-th quantum marginal.

Similarly, the maximal torus $T^r$ of the LU group $K$, which is denoted by $T^r=(S^1)^r$ acts on the K\"ahler manifold $M=\mathbb{P}_n$ in a Hamiltonian way and so $(\mathbb{P}_n,\omega_{FS},T^r,\mu_T)$ is also a Hamiltonian $T^r$-manifold, where $\omega_{FS}$ denotes the Fubini-Study symplectic form on $\mathbb{P}_n$ and $\mu_T:\mathbb{P}_n \rightarrow \mathfrak{t}^*_r$ denotes an equivariant moment map whose moment values $\xi=\mu_T(p)$, for $p \in \mathbb{P}_n$, are collections of diagonal elements of the quantum marginals. \warning{This Hamiltonian $T^r$ action on $\mathbb{P}_n$, such that $n = 2^r -1$ (please see Eq. \eqref{action} below), is the main focus of the current paper.}

The process of symmetry reduction of co-adjoint orbits as Hamiltonian manifolds, also known as symplectic reduction, has been studied numerously in the past decades (\cite{dh1982,marsden1974,guillemin1982,kirwan1984,ginzburg2002} to mention a very few). An underlying theme has been the problem of understanding the geometrical and topological properties of the resulting symplectic reduced spaces (also known as symplectic quotients). More precisely, in the context of symmetry reduction of a Hamiltonian system $(M,\omega,K,\mu)$, the components of the equivariant moment map $\mu: M \rightarrow \mathfrak{k}^*$ are conserved with respect to integral curves of Hamiltonian vector fields, i.e. the \textit{Noether's theorem}. If the isotropy subgroup $K_{\xi} = \left\{ g \in K | \, \text{Ad}^*_g(\xi) = \xi, \, \xi \in \mu(M) \subset \mathfrak{k}^* \right\}$, where $\text{Ad}^*_g$ denotes the coadjoint action of the Lie group on $\mathfrak{k}^*$, acts freely on $\mu^{-1}(\xi)$ then $\mu^{-1}(\xi)$ is a submanifold of $M$ and the quotient $\mu^{-1}(\xi)/K_{\xi}$ possesses an induced symplectic form and is called the Marsden-Weinstein quotient \cite{marsden1974}. However, the symplectic quotient $\mu^{-1}(\xi)/K_{\xi}$ acquire singularities if $\xi$ is not a regular value of $\mu$ or if $K_{\xi}$ does not act freely on $\mu^{-1}(\xi)$, which is called the singular reduction of Hamiltonian manifolds \cite{sjamaar1991}.

Moreover, for every Hamiltonian $K$-manifold $(M,\omega,K,\mu)$, which is equipped with an equivariant moment map $\mu:M \rightarrow \mathfrak{k}^*$, there exists a \textit{convex} polytope $\Delta := \mu^{\prime}(M)$, so called the moment (Kirwan) polytope \cite{guillemin1982,kirwan1984}. The corresponding \textit{invariant moment map} $\mu^{\prime}$ is defined by $\mu^{\prime}: M \rightarrow \mathfrak{t}^*_+, p \mapsto \mu^{\prime}(p) = \mu(K\, . \, p) \cap \mathfrak{t}^*_+$, in which $\mathfrak{t}^*_+ = \mathfrak{k}^*/K$ denotes the positive Weyl chamber. Symplectic reduction of Hamiltonian $K$-manifolds has had major applications thus far in various quantization techniques, such as Berezin-Toeplitz deformation quantization \cite{schlich1994}, and in some proofs of Guillemin-Sternberg's conjecture ``quantization commutes with reduction'', where the symplectic manifold $(M,\omega)$ denotes the phase space of a classical system acted upon symplectically by a semi-simple, compact Lie group $K$ \cite{schlich2001a}. The symplectic quotients $M_{\xi}= \mu_T^{-1}(\xi)/T^r$ of a Hamiltonian torus manifold $(M,\omega,T^r,\mu_T)$, for a regular value $\xi \in \Delta_{\text{reg}}$ where $\mu_T(\mathbb{P}_n) \equiv \Delta$, possess at most finite-quotient (or the so-called orbifold) singularities \cite{kirwan1984,kirwan1985}. Recall that $\mu_T(M)$ is the convex hull of the finite set of points $\mu_T(M^T)$, where $M^T$ denotes the fixed point set of the $T^r$-action on $M$ \cite{atiyah1982}.

\warning{In this paper}, the torus $T^r=(S^1)^r$ is the maximal torus of the LU group $K$ and its action on $\mathbb{P}_n$ is diagonalizable through the following homomorphism $\varphi: T^r \rightarrow U(n+1)$ as follows \cite{saeid2015a}
\begin{equation} \label{action}
\warning{\varphi: (t_1,t_2, \cdots , t_r) \mapsto \varphi (t_1,t_2, \cdots , t_r) = \left( \prod_{i=1}^{r}t_i^{a_{i,1}}, \prod_{i=1}^{r}t_i^{a_{i,2}}, \cdots, \prod_{i=1}^{r}t_i^{a_{i,n+1}}\right),}
\end{equation}
for some $a_{i,j} \in \left\{ \pm 1 \right\}$, where $\sum_{k=0}^{r}{\binom{r}{k}}=2^r=n+1$ and the $r\times(n+1)$ matrix $A$ contains all $a_{i,j}$s as columns.

The maximal torus $T^r$ of the Lie group $K$ is a subgroup of the maximal torus $T^{n+1}$ of the Lie group $SU(n+1)$, for which the components of the moment map $\Phi: M \rightarrow \mathfrak{t}^*_{n+1}$ are \textit{perfect Morse functions} with critical set $M^{T^{n+1}}$. Using Morse-theoretic arguments, one can easily show that
\[
M^{T^r} = \left\{ [0: \cdots:z_j: \cdots:0] \in \mathbb{P}_n | \, z_j=1, \, j=1, \cdots, n+1 \right\} = M^{T^{n+1}},
\]
where the Abelian  moment map $\mu_T: \mathbb{P}_n \rightarrow \mathfrak{t}^*_r$ for the $T^r$-action is given by
\begin{equation} \label{moment-map}
\mu_T(p) = \frac{1}{2} \sum_{j=1}^{n+1}{|z_j|^2 \alpha_j},
\end{equation}
and where $\alpha_j=(a_{1,j},a_{2,j}, \cdots,a_{r,j})^T$, for $j=1, \cdots,n+1$ are weights of the representation of $T^r$ on $\mathbb{C}^{n+1}$ \cite{kirwan1984}. The moment polytope $\mu_T(\mathbb{P}_n)$ is then a hypercube 
\begin{equation} \label{hyper-r-cube-eq}
\Delta \equiv \text{conv}.\left\{(x_1,\cdots,x_r) \in \mathbb{R}^r : \, x_i=\pm 1/2 \right\},
\end{equation}
spanned by $2^r$ vertices as the image of the fixed points under the moment map. The codimension-one walls divide the moment polytope $\Delta$ into subpolytopes $\Delta_{\text{reg}}$, whose interior consists of entirely regular values of the Abelian moment map $\mu_T$.

The \textit{equivariant cohomology} $H^*_K(M)$ provides us with some type of cohomology that is defined for (Hamiltonian) $K$-spaces in which not all points have trivial isotropy subgroups. There are two approaches to equivariant cohomology: topological approach (due to Borel \cite{borel1960}) and algebraic approach (due to Cartan in 1950). In Cartan's model, one can define an analogue of the de Rham approach to the \textit{equivariant differential forms} on $M$. Let $\mathfrak{k}$ be the Lie algebra of $K$ and $\Omega^*(M)$ denotes the algebra of smooth differential forms on $M$. Recall that a $K$-equivariant differential form is a map $\eta : \mathfrak{k} \rightarrow \Omega^*(M)$ such that
\[
\eta(X) := g^{-1} \, . \, \eta(\text{Ad}_g X), \quad \forall \, X \in \mathfrak{k} , \, g \in K.
\]
If $\mathbb{C}[\mathfrak{k}]$ denotes the algebra of complex polynomials on $\mathfrak{k}$, the algebra of equivariant differential forms $\Omega^*_K(M)$ is given by the subalgebra of $K$-invariant forms $(\mathbb{C}[\mathfrak{k}] \otimes \Omega^*(M))^K$. We can also define the $K$-equivariant exterior differential $D$ by $(D \eta)(X) := d(\eta(X)) - \iota_{X^{\#}}(\eta(X))$, for all $X \in \mathfrak{k}$, and where $\iota_{X^{\#}}$ denotes the contraction with vector field $X^{\#}$ induced by the infinitesimal action of $\mathfrak{k}$ on $M$. What makes equivariant forms important is that they form a complex $(\Omega^*_K(M),D)$, which in turn can be related to the ordinary cohomology. For compact, connected Lie group $K$ acting on smooth, compact manifold $M$, we have the equivariant de Rham cohomology $H^*_K(M) \equiv H^*(\{ \Omega^*_K(M),D \})$.

One of the features of equivariant cohomology is the \textit{localization formula} for the integrals of equivariantly closed differential forms. Recall the Atiyah-Bott, Berline-Vergne (ABBV) localization theorem for a Hamiltonian toric manifold, which relates the integral of an equivariantly closed form on manifold $M$ to the fixed-point set of the associated torus action. More precisely, when a torus $T=(S^1)^r$ acts on a compact, oriented manifold $M$ with $2r = \text{dim}_{\mathbb{R}}(M)$ and fixed point set $F$, for any closed $T$-equivariant form $\eta \in \Omega^*_T(M)$, such that $(D \eta)(\, . \, ) = 0$, we have
\begin{equation}
\int_M{\eta(X)} = \sum_{f \in F}{\int_f{\frac{\imath^*_f \, \eta(X)}{e_f(X)}}},
\label{abelian-localization-eq}
\end{equation}
where $\imath^*_f \, \eta(X)$ is the inclusion of the fixed component $f \in F$ and $e_f \in H^*_T(M)$ is the equivariant Euler class of the normal bundle to $f$ in $M$. \warning{In the literature}, Eq. \eqref{abelian-localization-eq} is called the  \textit{Abelian localiztion formula} \cite{atiyah1983,vergne1982,vergne1983,vergne1992}.

The same is true for any $D$-closed formal series $\sum_j \eta_j$, for $\eta_j \in \Omega^j_T(M)$, such as those of the form $\eta e^{i\tilde{\omega}}$, where $\tilde{\omega} :=\omega + \mu_X \in \Omega^2_T(M)$ is the equivariantly closed symplectic form on $M$ and $\mu_X$ is the Hamiltonian function generated by the corresponding equivariant moment map for every $X \in \mathfrak{t}$. By substituting the equivariant cohomology class $\eta e^{i\tilde{\omega}}$ in Eq. \eqref{abelian-localization-eq} and considering the fact that the functions $\mu_X(f) \in \mathfrak{t}^*$ are constant for fixed component $f$ we obtain the Duistermaat-Heckman theorem \cite{dh1982}, according to which the integral $\int_M{\exp{(-ti\mu_X)} \, \frac{\omega^n}{n!}}$ is exactly given by its stationary phase approximation. 

In \cite{jeffrey1995}, Jeffrey and Kirwan extended the localization for non-Abelian cases and developed a systematic approach to determine the cohomology of Marsden-Weinstein reduced space $\mu^{-1}(\xi)/K_{\xi}$ from the corresponding equivariant cohomology of the original manifold $M$ under compact group $K$ action. Considering the Kirwan surjective ring homomorphism $\kappa: H^*_K(M) \rightarrow H^*(\mu^{-1}(\xi)/K_{\xi}), \tilde{\omega} \mapsto \omega_{\xi}$ \cite{kirwan1984}, the resulting residue formula is explicitly obtained in \cite{jeffrey1995,jeffrey1997,jeffrey1998}. In \cite{jeffrey2003}, the residue formula for singular symplectic and their associated GIT quotients are obtained by partial desingularization of reduced spaces \cite{kirwan1985}. Non-Abelian localization has had two major applications thus far: first, for intersection numbers in moduli spaces of vector bundles on Riemannian surfaces \cite{jeffrey1998,kiem2000} and second, in some proofs of the Guillemin-Sternberg's conjecture that ``quantization commutes with reduction'' \cite{jeffrey1997,meinrenken1999}.

In this paper, we are utilizing a slightly different version of the localization, or the so-called iterated residue algorithm, introduced in \cite{guillemin1996} to compute integrals of the Kirwan's map of the equivariant cohomology classes $\tilde{\omega}^m \equiv \tilde{\omega} \wedge \cdots \wedge \tilde{\omega}$, where $\tilde{\omega} \in H^{2}_T(M,\mathbb{C})$, over the Abelian symplectic reduced spaces $M_{\xi} \cong \mu_T^{-1}(\xi)/T$, $\xi \in \Delta_{\text{reg}}$, of the Hamiltonian $T^r$-manifold $(\mathbb{P}_n,\omega_{FS},\mu_T,T^r)$, such that $2^r=n-1$.  From a straightforward dimensional analysis we find that $m=2^{r}-(r+1) = \text{dim}_{\mathbb{C}}(M_{\xi})$. Unless otherwise stated, throughout this paper $\omega$ denotes the Fubini-Study symplectic form $\omega_{FS}$ on the K\"{a}hler manifold $\mathbb{P}_n$.

The first step is to find the kernel of the Kirwan map $\kappa: H^*_T(M) \rightarrow H^*(\mu_T^{-1}(\xi)/T)$. Recall that the Kirwan map $\kappa: H^*_T(M) \rightarrow H^*(M_{\xi})$ is a surjective map. Therefore, finding kernel of the Kirwan map is equivalent to finding cohomology ring of the reduced space $M_{\xi}$, i.e. $H^*(M_{\xi})$, since the equivariant cohomology ring $H^*_{T^r}(\mathbb{P}_n;\mathbb{C})$ of a complex projective manifold $\mathbb{P}_n$ under the Hamiltonian action of $T^r$\warning{, as in Eq. \eqref{action},} is known. The second step is to use the combinatorial object (called ``dendrite'') and the wall-crossing algorithm introduced in \cite{guillemin1996} to iteratively express $\int_{M_\xi}\kappa(\tilde{\eta}^a\tilde{\omega}^b)$, where $\tilde{\eta}$ and $\tilde{\omega}$ are degree two complementary equivariant cohomology classes such that  $a+b=m=\text{dim}_{\mathbb{C}}(M_{\xi})$, in terms of integration over the connected components of $T^r$ fixed point set $M^{T} \cong F$.

The outline of paper is as follows. In section \ref{cohomology-rings} we generalize the results of \cite{kalkman1995}, from circle to a torus action, and find the cohomology rings $H^*(M_{\xi})$ of corresponding Abelian symplectic quotients $M_{\xi}$ in terms of elementary symmetric functions for the cases $r=2,3$ and propose an algorithm for a general $r$-case. In section \ref{recursive-wall-crossing-integrals}, by studying the geometry of the corresponding Kirwan polytope of the Hamiltonian $T^r$-manifold $(\mathbb{P}_n,\omega,\mu_T,T^r)$ we investigate the wall-crossing algorithm \warning{and use} dendrites to compute cohomological pairings over the Abelian symplectic reduced spaces and elaborate the discussions with explicit examples for the cases with $r=2,3$. Finally, in section \ref{conclusions} we conclude and provide outlooks for future studies.

\section{Cohomology Rings of Abelian Quotients} \label{cohomology-rings}
\subsection{Special Case: Circle Action} \label{circle-action}
In \cite{tolman2003}, Tolman and Weitsman proposed a method to characterize kernel of the Kirwan map $\kappa$
\begin{equation} \label{kirwan-map-eq}
\kappa: H_T(M) \rightarrow H^*(M_{\xi}), \quad \xi \in \mathfrak{t}^*,
\end{equation}
for a Hamiltonian action of a compact Lie group $T$ on a symplectic manifold $(M,\omega)$. Their method is based on the cohomology of the original symplectic manifold and the data encoded in fixed point set of the action. More precisely, for the case of a circle action:
\begin{thm}
\cite{tolman2003} Consider the Hamiltonian torus $T$-manifold $(M,\omega,T,\mu_T)$, where $(M,\omega)$ is a compact symplectic manifold. Assume $0$ is a regular value of moment map $\mu_T$ and $F$ denotes the fixed point set. For all $X \in \mathfrak{t}$ define
\[
N_X := \left\{p \in M | \, \langle \mu_T(p),X \rangle \leq 0 \right\},
\] 
\[
Q_{X} := \left\{\alpha \in H^*_T(M;\mathbb{Q}) | \, \alpha|_{F \cap N_X} =0 \right\},
\]
\[
Q = \sum_{X \in \mathfrak{t}}{Q_X}.
\]
Then, there is a short exact sequence
\[
0 \rightarrow Q \rightarrow H^*_T(M;\mathbb{Q}) \xrightarrow{\kappa} H^*(\mu_T^{-1}(0)/T;\mathbb{Q}) \rightarrow 0
\]
where $\kappa: H^*_T(M;\mathbb{Q}) \rightarrow H^*(\mu_T^{-1}(0)/T;\mathbb{Q})$ is the Kirwan map.
\end{thm}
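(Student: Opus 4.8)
The plan is to establish the statement first for a genuine circle, $T=S^1$, and then to reduce the general torus case to it by induction on $\dim T$ (replacing $M$ by the symplectic quotient of a codimension-one subtorus at a regular value and checking that regularity and equivariant formality persist at each stage). For $T=S^1$ one has $\mathfrak{t}\cong\mathbb{R}$, so fixing a positive generator $X$ the family $\{N_Y\}_{Y\in\mathfrak{t}}$ reduces to the two closed invariant sets $M_{\le 0}:=N_X=\{\phi\le 0\}$ and $M_{\ge 0}:=N_{-X}=\{\phi\ge 0\}$, where $\phi:=\langle\mu_T,X\rangle$, together with $N_0=M$; correspondingly $Q=Q_X+Q_{-X}+Q_0$. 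Three standard facts will be used: (i) since $0$ is a regular value, $T$ acts on $\mu_T^{-1}(0)$ with finite stabilizers, so over $\mathbb{Q}$ the projection gives $H^*_T(\mu_T^{-1}(0);\mathbb{Q})\cong H^*(\mu_T^{-1}(0)/T;\mathbb{Q})$, under which $\kappa$ is the restriction to $\mu_T^{-1}(0)$; (ii) Kirwan's surjectivity theorem, i.e.\ $\kappa$ is onto; (iii) $\phi$ is an equivariantly perfect $T$-invariant Morse--Bott function with critical set $F$, so $M$ is equivariantly formal, and the same perfection applied to the open invariant sublevel set $\{\phi<\varepsilon\}$ --- which, $0$ being regular, $T$-equivariantly deformation retracts onto $M_{\le 0}$ by downward gradient flow --- and to $\{\phi>-\varepsilon\}$ yields that the restrictions $H^*_T(M_{\le 0};\mathbb{Q})\hookrightarrow H^*_T(F_{\le 0};\mathbb{Q})$ and $H^*_T(M_{\ge 0};\mathbb{Q})\hookrightarrow H^*_T(F_{\ge 0};\mathbb{Q})$ are injective, where $F_{\le 0}:=F\cap M_{\le 0}$, $F_{\ge 0}:=F\cap M_{\ge 0}$; moreover the overlap $\{-\varepsilon<\phi<\varepsilon\}$ retracts $T$-equivariantly onto $\mu_T^{-1}(0)$.

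Next I would run the Mayer--Vietoris sequence in equivariant cohomology for the cover $M=\{\phi<\varepsilon\}\cup\{\phi>-\varepsilon\}$, which by (iii) reads
\[
\cdots\to H^*_T(M)\xrightarrow{(r^-,r^+)} H^*_T(M_{\le 0})\oplus H^*_T(M_{\ge 0})\xrightarrow{s} H^*_T(\mu_T^{-1}(0))\xrightarrow{\delta} H^{*+1}_T(M)\to\cdots,
\]
with $r^\pm$ the restrictions and $s(\alpha,\beta)=\alpha|_{\mu_T^{-1}(0)}-\beta|_{\mu_T^{-1}(0)}$. Because $\mu_T^{-1}(0)\subset M_{\le 0}$ and $\mu_T^{-1}(0)\subset M_{\ge 0}$, the Kirwan map factors as $\kappa=r^-_0\circ r^-=r^+_0\circ r^+$, where $r^\pm_0$ are the further restrictions to $\mu_T^{-1}(0)$ and one uses (i); by surjectivity of $\kappa$ the map $s$ is then surjective, so $\delta=0$ and the sequence collapses to the short exact sequence
\[
0\to H^*_T(M)\xrightarrow{(r^-,r^+)} H^*_T(M_{\le 0})\oplus H^*_T(M_{\ge 0})\xrightarrow{s} H^*_T(\mu_T^{-1}(0))\to 0 .
\]

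I would then identify the kernels. Clearly $\ker r^-\subseteq Q_X$ and $\ker r^+\subseteq Q_{-X}$ since $F_{\le 0}\subseteq M_{\le 0}$ and $F_{\ge 0}\subseteq M_{\ge 0}$; conversely, if $\alpha|_{F_{\le 0}}=0$ then $\alpha|_{M_{\le 0}}=0$ by the injectivity in (iii), whence $Q_X\subseteq\ker r^-$, and similarly $Q_{-X}\subseteq\ker r^+$, so $\ker r^-=Q_X$ and $\ker r^+=Q_{-X}$; moreover $Q_0=\ker(H^*_T(M)\to H^*_T(F))=0$ by equivariant formality, hence $Q=Q_X+Q_{-X}$. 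Since $\kappa=r^-_0\circ r^-=r^+_0\circ r^+$ we get $Q=\ker r^-+\ker r^+\subseteq\ker\kappa$. For the reverse inclusion, take $\alpha\in\ker\kappa$; then $s(r^-\alpha,0)=\alpha|_{\mu_T^{-1}(0)}=0$, so by exactness of the short exact sequence there is $\beta\in H^*_T(M)$ with $r^-\beta=r^-\alpha$ and $r^+\beta=0$, i.e.\ $\beta\in\ker r^+=Q_{-X}$ and $\alpha-\beta\in\ker r^-=Q_X$, so $\alpha=(\alpha-\beta)+\beta\in Q$. Therefore $\ker\kappa=Q$, which together with (ii) is exactly the asserted short exact sequence.

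The hard part is fact (iii): that a component of the moment map is an equivariantly perfect Morse--Bott function and, crucially, that this perfection descends to the sublevel and superlevel sets so as to force injectivity of $H^*_T(M_{\le 0})\hookrightarrow H^*_T(F_{\le 0})$ and $H^*_T(M_{\ge 0})\hookrightarrow H^*_T(F_{\ge 0})$ --- this is where Atiyah's and Kirwan's analysis enters (orientability and even dimensionality of the negative normal bundles, giving equality in the equivariant Morse inequalities), together with the technical care of thickening the manifolds-with-boundary $M_{\le 0}$, $M_{\ge 0}$ and the level set $\mu_T^{-1}(0)$ to homotopy-equivalent open invariant sets by gradient flow so that Mayer--Vietoris applies. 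The inductive passage from the circle to a higher-rank torus is comparatively routine once regularity and equivariant formality are seen to survive each reduction.
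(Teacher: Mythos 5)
The paper does not actually prove this statement; it is quoted verbatim as a cited result of Tolman and Weitsman \cite{tolman2003}, so there is no in-paper proof to compare against. Judged on its own terms, your circle case is essentially the standard Tolman--Weitsman argument and is sound: the Mayer--Vietoris sequence for the thickened sublevel/superlevel cover, the collapse of the connecting map via Kirwan surjectivity, the identification $\ker r^{\pm}=Q_{\pm X}$ through injectivity of restriction to the fixed points lying in each piece, and the diagram chase giving $\ker\kappa=\ker r^-+\ker r^+$ are all correct, modulo your fact (iii), which you rightly flag as the technical heart (equivariant perfection of $\langle\mu_T,X\rangle$ and the non-zero-divisor property of the equivariant Euler classes of the negative normal bundles).

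The genuine gap is the passage from $S^1$ to a higher-rank torus, which you dismiss as ``comparatively routine.'' It is not: it is the substantive content of the torus version of the theorem. First, for a non-generic $X\in\mathfrak{t}$ the value $0$ need not be a regular value of $\langle\mu_T(\cdot),X\rangle$, so the deformation retraction of $\{\langle\mu_T,X\rangle<\varepsilon\}$ onto $N_X$ and the Mayer--Vietoris setup do not apply verbatim to every summand $Q_X$; one needs the sublevel-set injectivity lemma in a form valid across critical values, and one must still show that $Q_X\subseteq\ker\kappa$ for such $X$ (this direction is comparatively easy) and, much harder, that the sum of the $Q_X$ exhausts $\ker\kappa$. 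Second, reduction in stages by a codimension-one subtorus requires choosing a \emph{generic} circle $S\subset T$ with $M^S=M^T$ and with $0$ a regular value of $\mu_S$ --- which is not automatic, since $\mu_S^{-1}(0)\supsetneq\mu_T^{-1}(0)$ may meet critical points --- and the intermediate quotient $\mu_S^{-1}(0)/S$ is in general only an orbifold, so the inductive hypothesis must be formulated and verified with $\mathbb{Q}$ coefficients in the orbifold category, and one must track how the sets $Q_Y$ for $Y$ in the Lie algebra of the quotient torus $T/S$ pull back to sets of the form $Q_X$ upstairs. Without these verifications the induction does not close, and this is precisely the part of the argument where Tolman and Weitsman (and, in the form used later in this paper, Theorem \ref{kernel-kirwan-map-jeffrey-thm}) do real work.
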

It can also read as follows \cite{jeffrey2003b}:
\begin{cor} \label{tolman-weitsman-circle-cor}
For the case of a circle action $T=S^1$, the kernel $Q=Ker(\kappa_S)$ of the Kirwan map $\kappa_S$ is the sum
\begin{equation} \label{tw-kernel-circle}
Q \equiv Ker(\kappa_S) = Q_+ \oplus Q_{-},
\end{equation}
where $Q_{\pm} = \left\{ \alpha \in H^*_T(M) : \, \alpha|_f = 0 \, \, \text{for all} \, \, f \in F \, \text{for which} \, \pm \mu_T(f) > 0 \right\}$.
\end{cor}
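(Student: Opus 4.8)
The plan is to read off the Corollary directly from the Tolman--Weitsman Theorem by specialising to $T=S^1$, for which $\mathfrak{t}\cong\mathbb{R}$ with the pairing $\langle\cdot,\cdot\rangle$ given by multiplication. The Theorem expresses $\mathrm{Ker}(\kappa_S)$ as the a priori infinite sum $Q=\sum_{X\in\mathfrak{t}}Q_X$, so the first step is to see that it collapses to three terms. Since $N_X=\{p\in M:\langle\mu_T(p),X\rangle\le 0\}$ depends on $X\in\mathbb{R}$ only through the ray $\mathbb{R}_{>0}X$, one has $N_X=\mu_T^{-1}((-\infty,0])$ for every $X>0$, $N_X=\mu_T^{-1}([0,\infty))$ for every $X<0$, and $N_0=M$. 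Hence $Q_X$ takes only the three values $\{\alpha:\alpha|_F=0\}$, $\{\alpha:\alpha|_{F\cap\mu_T^{-1}((-\infty,0])}=0\}$ and $\{\alpha:\alpha|_{F\cap\mu_T^{-1}([0,\infty))}=0\}$, so $Q$ is the sum of these three subspaces.

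Next I would invoke the hypothesis that $0$ is a regular value of $\mu_T$. For a circle action the critical set of $\mu_T$ is exactly the fixed set $F$ (because $d\mu_T$ vanishes precisely where the generating vector field does), so no fixed point lies in $\mu_T^{-1}(0)$; the weak inequalities defining the $N_X$ therefore become strict on $F$, giving $F\cap\mu_T^{-1}((-\infty,0])=\{f\in F:\mu_T(f)<0\}$ and $F\cap\mu_T^{-1}([0,\infty))=\{f\in F:\mu_T(f)>0\}$. In the notation of the Corollary this identifies the latter two summands with $Q_{-}$ and $Q_{+}$ respectively, so $Q=\{\alpha:\alpha|_F=0\}+Q_{+}+Q_{-}$.

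Finally I would promote this to a direct sum and discard the first summand using Kirwan injectivity: for a Hamiltonian torus action on a compact symplectic manifold the restriction $H^*_T(M;\mathbb{Q})\to H^*_T(F;\mathbb{Q})$ is injective, since $H^*_T(M;\mathbb{Q})$ is a free, hence torsion-free, module over $H^*(BT;\mathbb{Q})$ and thus embeds into its localisation, on which the restriction to $F$ becomes an isomorphism by the localisation theorem. Consequently $\{\alpha:\alpha|_F=0\}=0$, which both removes that summand and forces $Q_{+}\cap Q_{-}=\{\alpha:\alpha|_f=0\ \text{for all}\ f\in F\}=0$ — using that every $f\in F$ has $\mu_T(f)\ne 0$ together with injectivity. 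Hence $Q=\mathrm{Ker}(\kappa_S)=Q_{+}\oplus Q_{-}$. The only delicate point is the bookkeeping in the first two steps, namely matching sign conventions and upgrading the weak inequalities to strict ones via regularity of $0$; once that is in place, the directness of the sum is a formal consequence of equivariant injectivity, which is a standard input here rather than something to be reproved.
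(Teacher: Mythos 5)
Your proof is correct and follows the route the paper intends: the corollary is presented there as an immediate specialization of the Tolman--Weitsman theorem to $T=S^1$ (with the proof delegated to a citation), and you carry out exactly that specialization. The three ingredients you supply --- the collapse of $\sum_{X\in\mathfrak{t}}Q_X$ to the cases $X>0$, $X<0$, $X=0$; the upgrade of the weak inequalities to strict ones on $F$ via regularity of $0$ (since the critical set of the circle moment map is $F$); and Kirwan injectivity of $H^*_T(M;\mathbb{Q})\to H^*_T(F;\mathbb{Q})$ to discard the $X=0$ summand and to force $Q_+\cap Q_-=0$ --- are all correctly deployed.
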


In \cite{kalkman1995}, Kalkman obtained the cohomology ring of symplectic quotient of a projective \warning{space} under the \warning{Hamiltonian} action of a circle \warning{$S^1$}. He obtained the following isomorphism of algebras
\begin{equation} \label{kalkman-isomorphism-algebras}
H^*_{S^1}(\mathbb{P}_n) \cong \mathbb{C}[\phi,\omega]/ ( \prod_{p \in F} \left( \omega + \mu_T(p) \phi \right) ),
\end{equation}
for the equivariant cohomology of a Hamiltonian $S^1$-manifold $(\mathbb{P}_n, \omega,\mu_T,S^1)$, with fixed point set $F$ and where the cohomology groups are with complex coefficients. In Eq. \eqref{kalkman-isomorphism-algebras} $\omega$ is the degree-two symplectic form $\omega \in H^2(\mathbb{P}_n)$ generating the cohomology $H^*(\mathbb{P}_n)$ and $\phi$ is the degree-two generator of the equivariant cohomology $H^*_{S^1}(pt)$.
\begin{thm} \label{kalkman-thm}
\cite{kalkman1995} Let $I \subset \mathbb{C}[\phi,\omega]$ be the ideal generated by 
\begin{equation} \label{kalkman-map-eq}
\tilde{Q}_{+} = \prod_{ \mu_{S^1}(p) > \xi } \left( \omega + \mu_{S^1}(p) \phi \right),
\quad \tilde{Q}_{-} = \prod_{ \mu_{S^1}(p) < \xi } \left( \omega + \mu_{S^1}(p) \phi \right).
\end{equation} 
Then, for the Hamiltonian circle $S^1$-manifold $(\mathbb{P}_n, \omega,\mu_{S^1},S^1)$, with fixed point $p \in F$, we have the following isomorphism of rings
\begin{equation} \label{kalkman2}
H^*(\mu_{S^1}^{-1}(\xi)/S^1) \cong \mathbb{C}[\phi,\omega]/I.
\end{equation}
\end{thm}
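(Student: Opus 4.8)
The plan is to obtain the ring isomorphism \eqref{kalkman2} by feeding the Tolman--Weitsman description of $\ker\kappa_{S^1}$ (Corollary \ref{tolman-weitsman-circle-cor}) into Kalkman's presentation \eqref{kalkman-isomorphism-algebras} of the equivariant cohomology ring, and then carrying out a short computation in the polynomial ring $R:=\mathbb{C}[\phi,\omega]$. Since the Kirwan map $\kappa_{S^1}\colon H^*_{S^1}(\mathbb{P}_n;\mathbb{C})\to H^*(\mu_{S^1}^{-1}(\xi)/S^1;\mathbb{C})$ is surjective, it induces a ring isomorphism $H^*(\mu_{S^1}^{-1}(\xi)/S^1;\mathbb{C})\cong H^*_{S^1}(\mathbb{P}_n;\mathbb{C})/\ker\kappa_{S^1}$, so it suffices to identify $\ker\kappa_{S^1}$, transported through \eqref{kalkman-isomorphism-algebras}, with the image of the ideal $I=(\tilde Q_+,\tilde Q_-)$.

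First I would make the fixed-point restrictions explicit in the presentation $H^*_{S^1}(\mathbb{P}_n)\cong R/(\Pi)$, where $\Pi:=\prod_{p\in F}(\omega+\mu_{S^1}(p)\phi)$. The key point is that under this identification the restriction $\imath_p^*\colon H^*_{S^1}(\mathbb{P}_n)\to H^*_{S^1}(\{p\})=\mathbb{C}[\phi]$ is the quotient map $R/(\Pi)\to R/(\omega+\mu_{S^1}(p)\phi)\cong\mathbb{C}[\phi]$, i.e. it sends $\phi\mapsto\phi$ and $\omega\mapsto-\mu_{S^1}(p)\phi$; this is precisely the normalization that makes $\Pi$ generate the ideal of relations of the (injective, by equivariant formality of $\mathbb{P}_n$) total restriction $\bigoplus_{p\in F}\imath_p^*$. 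Consequently a class represented by $f\in R$ satisfies $\imath_p^*[f]=0$ if and only if $(\omega+\mu_{S^1}(p)\phi)\mid f$ in $R$, and the Tolman--Weitsman pieces become
\[
Q_{\pm}=\bigl\{[f]\in R/(\Pi):\ (\omega+\mu_{S^1}(p)\phi)\mid f\ \text{for every}\ p\in F\ \text{with}\ \pm(\mu_{S^1}(p)-\xi)>0\bigr\}.
\]

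Next I would do the commutative algebra. Because $\xi$ is a regular value, no fixed point lies on $\mu_{S^1}^{-1}(\xi)$, so $\Pi=\tilde Q_+\,\tilde Q_-$ with $\tilde Q_\pm$ as in \eqref{kalkman-map-eq}. The $S^1$-fixed points of $\mathbb{P}_n$ being isolated, the weights of the linearized action are pairwise distinct, hence the moment values $\mu_{S^1}(p)$ are pairwise distinct and the linear forms $\omega+\mu_{S^1}(p)\phi$ are pairwise non-associate irreducibles in the UFD $R$. Therefore a representative $f$ divisible by every $\omega+\mu_{S^1}(p)\phi$ with $\mu_{S^1}(p)>\xi$ is divisible by their product $\tilde Q_+$, and conversely; so $Q_+=(\tilde Q_+)/(\Pi)$, and symmetrically $Q_-=(\tilde Q_-)/(\Pi)$. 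By Corollary \ref{tolman-weitsman-circle-cor}, $\ker\kappa_{S^1}=Q_+\oplus Q_-$ corresponds to $(\tilde Q_+,\tilde Q_-)/(\Pi)=I/(\Pi)$. Since $\Pi=\tilde Q_+\tilde Q_-\in I$, the third isomorphism theorem gives $H^*(\mu_{S^1}^{-1}(\xi)/S^1;\mathbb{C})\cong\bigl(R/(\Pi)\bigr)/\bigl(I/(\Pi)\bigr)\cong R/I=\mathbb{C}[\phi,\omega]/I$, and every map involved is a graded ring homomorphism, which is \eqref{kalkman2}.

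The main obstacle is the second paragraph: pinning down that the restriction $\imath_p^*$ really is reduction modulo $(\omega+\mu_{S^1}(p)\phi)$ in Kalkman's normalization and that the total restriction $\bigoplus_p\imath_p^*$ is injective, so that Corollary \ref{tolman-weitsman-circle-cor} can be read off as a divisibility statement. This rests on $\mathbb{P}_n$ being equivariantly formal, so that $H^*_{S^1}(\mathbb{P}_n)$ is a free $\mathbb{C}[\phi]$-module of rank $|F|$ with image in $\bigoplus_p\mathbb{C}[\phi]$ governed by the one-dimensional orbits; granting this, the remainder is a routine gcd/lcm argument in $\mathbb{C}[\phi,\omega]$. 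If one dropped the hypothesis that the fixed points are isolated, distinct fixed points could share a moment value and one would then have to check that vanishing along the upper fixed components forces divisibility by $\tilde Q_+$ with its multiplicities rather than only by its radical; for $\mathbb{P}_n$ this complication does not arise.
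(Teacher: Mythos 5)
Your argument is correct and follows essentially the same route the paper sketches in the remark immediately after the theorem: present $H^*_{S^1}(\mathbb{P}_n)$ as $\mathbb{C}[\phi,\omega]/(\prod_p(\omega+\mu_{S^1}(p)\phi))$ via $h_S$, identify $\ker\kappa_S$ through the Tolman--Weitsman description of Corollary \ref{tolman-weitsman-circle-cor}, and conclude that $\ker(\kappa_S\circ h_S)=(\tilde Q_+,\tilde Q_-)$. The paper itself only cites \cite{kalkman1995} for this; your write-up supplies the missing details (injectivity of restriction to the fixed points, distinctness of the weights, and the divisibility argument in the UFD $\mathbb{C}[\phi,\omega]$) correctly.
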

In fact, Kalkman obtained the ring isomorphism \eqref{kalkman2} by composing the surjective algebra homomorphism $h_S: \mathbb{C}[\phi,\omega] \rightarrow H^*_{S^1}(\mathbb{P}_n)$ and the Kirwan map $\kappa_S: H^*_{S^1}(\mathbb{P}_n) \rightarrow H^*(\mu_{S^1}^{-1}(\xi)/S^1)$. In other words, $\tilde{Q}_{+}$ and $\tilde{Q}_{-}$ are generators of the kernel of surjective homomorphism $\kappa_S \circ h_S:\mathbb{C}[\phi,\omega] \rightarrow H^*(\mu_{S^1}^{-1}(\xi)/S^1)$ \cite{mohammadalikhani2004}.

\subsection{Torus $T^r$-Action}
In this section we denote a $1$-dimensional torus (circle) by $S$ and a $r$-dimensional torus by $T$. The following theorem asserts that the knowledge of kernel of the Kirwan map $\kappa_S: H^*_S(M) \rightarrow H^*(\mu_S^{-1}(\xi)/S)$, where $\xi \in \mathfrak{s^*}$ is a regular value of the moment map $\mu_S: M \rightarrow \mathfrak{s}^*$ for a generic circle $S \subset T$, can be used to determine the kernel of the Kirwan map $\kappa$ in Eq. \eqref{kirwan-map-eq}:
\begin{thm} \label{kernel-kirwan-map-jeffrey-thm}
\cite{tolman2003,jeffrey2005}
\begin{equation}
Ker(\kappa) = \sum_{S\subset T}{Q_+^S \oplus Q_{-}^S},
\end{equation}
where $Q_+^S \oplus Q_{-}^S = Ker(\kappa_S)$ is the Tolman-Weitsman kernel for circle action in Eq. \eqref{tw-kernel-circle} in corollary \ref{tolman-weitsman-circle-cor} and the sum is over all generic circles, such that two fixed point sets $M^S$ and $M^T$ are equal and $\xi \in \mathfrak{s}^*$ is a regular value of the moment map $\mu_S$. 
\end{thm}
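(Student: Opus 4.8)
The plan is to derive this from the general Tolman--Weitsman description of $\mathrm{Ker}(\kappa)$ for a torus action (the Tolman--Weitsman theorem stated above, with the sets $N_X$, $Q_X$ and $\mathrm{Ker}(\kappa)=Q=\sum_{X\in\mathfrak t}Q_X$) by two reductions: first, collapsing the sum over all of $\mathfrak t$ onto a sum over rational directions $X$ whose one-parameter subgroup is a circle $S=S_X$ with $M^{S}=M^{T}$ and $\xi$ a regular value of $\mu_S$; and second, identifying, for each such direction $X$, the pair $Q_X\oplus Q_{-X}$ with the circle kernel $\mathrm{Ker}(\kappa_S)=Q_+^S\oplus Q_-^S$ of Corollary \ref{tolman-weitsman-circle-cor}. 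After replacing $\mu_T$ by $\mu_T-\xi$ we may assume throughout that $\xi=0$, which is the form in which the Tolman--Weitsman theorem is stated.

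\textbf{Collapsing the sum.} Observe that $Q_X=\{\alpha\in H^*_T(M;\mathbb Q):\alpha|_{F\cap N_X}=0\}$ depends on $X$ only through the finite set $F\cap N_X=\{f\in F:\langle\mu_T(f),X\rangle\le 0\}$, and that this set is locally constant on the complement of the finite arrangement of rational hyperplanes $H_f=\{\langle\mu_T(f),\cdot\rangle=0\}$, $f\in F$, where it equals $\{f\in F:\langle\mu_T(f),X\rangle<0\}$. Given an arbitrary $Y\in\mathfrak t$, any $X$ lying in a component of the complement of this arrangement having $Y$ in its closure satisfies $F\cap N_X\subseteq F\cap N_Y$ — the pairings strictly negative at $Y$ stay negative, while those vanishing at $Y$ lie in $N_Y$ and may only drop out of $N_X$ — hence $Q_Y\subseteq Q_X$. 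Enlarging this arrangement by the finitely many further rational hyperplanes on which some weight of a $T$-isotropy representation vanishes, and using that rational points off a finite union of proper hyperplanes are dense, we may choose such an $X$ to be rational, off the enlarged arrangement, and arbitrarily close to $Y$; then the closure of $\{\exp tX\}$ is a circle $S=S_X$, and being off the weight hyperplanes forces $M^{S}=M^{T}=F$. Therefore $\mathrm{Ker}(\kappa)=\sum_{X\in\mathfrak t}Q_X=\sum_X Q_X$, the latter sum now over these "good" rational directions.

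\textbf{Matching circle kernels and assembling.} For a good $X$ the moment map of the induced $S$-action is $\mu_S=\langle\mu_T(\cdot),X\rangle$, whose critical set is $M^{S}=F$, so its critical values are the numbers $\langle\mu_T(f),X\rangle$, $f\in F$, none of which is $0$ since $X\notin H_f$; thus $0$ is a regular value of $\mu_S$. Moreover $F\cap N_X=\{f\in F:\mu_S(f)<0\}$ and $F\cap N_{-X}=\{f\in F:\mu_S(f)>0\}$, which by Corollary \ref{tolman-weitsman-circle-cor} mean precisely $Q_X=Q_-^S$ and $Q_{-X}=Q_+^S$. Since every $f\in F$ has $\mu_S(f)$ strictly of one sign, an $\alpha$ vanishing on both $F\cap N_X$ and $F\cap N_{-X}$ vanishes on all of $F$, hence is $0$ by injectivity of the restriction $H^*_T(M;\mathbb Q)\hookrightarrow H^*_T(F;\mathbb Q)$ (equivariant formality of Hamiltonian torus actions on compact symplectic manifolds); so $Q_X\oplus Q_{-X}$ is a genuine direct sum, equal to $\mathrm{Ker}(\kappa_S)$. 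As the index set of good directions is symmetric under $X\mapsto-X$, we conclude $\mathrm{Ker}(\kappa)=\sum_X Q_X=\sum_X\bigl(Q_X\oplus Q_{-X}\bigr)=\sum_{S\subset T}\bigl(Q_+^S\oplus Q_-^S\bigr)$, the sum over the generic circles with $M^S=M^T$ and $\xi$ regular for $\mu_S$, which is the asserted formula.

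The step I expect to be the main obstacle is the collapsing argument: one must verify that perturbing an arbitrary $Y$ to a nearby generic rational $X$ never enlarges $F\cap N_X$ beyond $F\cap N_Y$, so that no class in $Q_Y$ is lost, and simultaneously that the "bad" hyperplanes — those making $M^{S_X}\supsetneq M^T$ or rendering $0$ a critical value of $\mu_{S_X}$ — are genuinely proper, so that good rational directions remain dense near every $Y$. The equivariant-formality input giving the injection into $H^*_T(F;\mathbb Q)$ is classical and is needed only to upgrade the circle-wise sum $Q_+^S+Q_-^S$ to a direct sum.
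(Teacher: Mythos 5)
The paper does not actually prove Theorem \ref{kernel-kirwan-map-jeffrey-thm}; it is quoted from \cite{tolman2003,jeffrey2005} without argument, so there is no internal proof to compare against. Your reconstruction is correct and is essentially the standard derivation found in those references: reduce the Tolman--Weitsman sum $\sum_{X\in\mathfrak t}Q_X$ to generic rational directions by a perturbation argument, then identify $Q_X\oplus Q_{-X}$ with the circle kernel $Q_+^S\oplus Q_-^S$ of Corollary \ref{tolman-weitsman-circle-cor}. The two key steps check out: the inclusion $F\cap N_X\subseteq F\cap N_Y$ for $X$ generic in a chamber whose closure contains $Y$ does give $Q_Y\subseteq Q_X$ (vanishing on the larger set implies vanishing on the smaller), and with $X$ off the hyperplanes $\langle\mu_T(f)-\xi,\cdot\rangle=0$ the value $\langle\xi,X\rangle$ is regular for $\mu_S$, so the sign conditions defining $Q_{\pm X}$ match those defining $Q_\mp^S$. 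Two places deserve one more sentence each, though neither is a genuine gap. First, ``off the weight hyperplanes forces $M^{S_X}=M^T$'' uses that every connected component of $M^{S_X}$ is a compact symplectic submanifold carrying a Hamiltonian $T$-action and hence contains a $T$-fixed point, so that the local statement at fixed points globalizes. Second, in the identification $Q_X\oplus Q_{-X}=\mathrm{Ker}(\kappa_S)$ one should note that the theorem's $Q_\pm^S$ are to be read as subspaces of $H^*_T(M)$ cut out by the same vanishing conditions at fixed points (with $\kappa_S$ the composite of restriction $H^*_T(M)\to H^*_S(M)$ with the circle Kirwan map), since Corollary \ref{tolman-weitsman-circle-cor} as stated lives in $H^*_S(M)$; with that reading your argument, including the use of Kirwan injectivity of $H^*_T(M;\mathbb{Q})\to H^*_T(F;\mathbb{Q})$ to get the direct sum, is exactly what the cited sources do.
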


\begin{rem}
In general, for an $r$-dimensional torus $T=(S)^r$ acting on a compact manifold $M$, the characters of the action can be identified with cohomology classes of  equivariant cohomology of $M$ \cite{zielenkiewicz2014}. More precisely, the kernel of the algebra $h: H^*_{T^r}(\text{pt})[\omega] \cong \mathbb{C}[\theta_1,\theta_2, \cdots,\theta_r,\omega] \rightarrow  H^*_T(\mathbb{P}_n)$ is the ideal generated by the relation 
\begin{equation} \label{relation-kernel-eq}
\prod_{p \in F}{(\omega + \chi_j)}=0,
\end{equation}
where the $\chi_j$s are distinct additive characters of the torus $T^r$-action on $\mathbb{P}_n$ \cite{weber2012}. Using the elementary symmetric functions, $\sigma_j$, the relation becomes
\begin{equation} \label{symmetric-pol-eq1}
\sum_{j=1}^{n+1}{\sigma_j({\chi_1, \cdots, \chi_{n+1}}) \,  \omega^{n-j+1}} = 0.
\end{equation}
\end{rem}
\begin{rem}
Recalling the Eq. \eqref{action}, in our case we have
\begin{equation} \label{equivariant-cohomo-pn}
H_{T^r}^*(\mathbb{P}_n) \cong \mathbb{C}[\theta_1,\theta_2, \cdots,\theta_r,\omega] / \langle \prod_{j=1}^{n+1}{ \left( \omega + \sum_{i=1}^{r}{a_{i,j}\theta_i} \right) } \rangle,
\end{equation}
where $\chi_j=\sum_{i=1}^{r}{a_{i,j}\theta_i}$, for $j=1,\cdots,n+1$ are additive characters of the torus $T^r$-action and $\omega$ is the degree-two generator of the cohomology ring $H^*(\mathbb{P}_n)$ and $H^*_{T^r}(\text{pt})[\omega] \cong \mathbb{C}[\theta_1,\theta_2, \cdots, \theta_r,\omega]$. 
\end{rem}

\begin{cor} \label{torus-kalkman-cor}
Let $\tilde{Q}_T \subset \mathbb{C}[\theta_1,\theta_2,\cdots,\theta_r,\omega]$ be the ideal generated by
\begin{equation} \label{torus-generators-eq1}
\tilde{Q}_T = \sum_{S \subset T}{\tilde{Q}^S_{+} \oplus \tilde{Q}^S_{-}},
\end{equation}
where
\begin{equation} \label{torus-generators-eq2}
\tilde{Q}^S_{+} = \prod_{\substack{p \in \mathcal{F} \\ \langle \mu_S(p), \alpha \rangle > \langle \xi,\alpha \rangle}}{ \left( \omega + \sum_{i=1}^{r}{a_{i,j}\theta_i} \right) }, \quad \alpha \in \mathfrak{s}^*, 
\end{equation}
\begin{equation} \label{torus-generators-eq3}
\tilde{Q}^S_{-} = \prod_{\substack{p \in \mathcal{F} \\ \langle \mu_S(p), \alpha \rangle < \langle \xi,\alpha \rangle}}{ \left( \omega + \sum_{i=1}^{r}{a_{i,j}\theta_i} \right) }, \quad \alpha \in \mathfrak{s}^*. 
\end{equation}
Then, $\tilde{Q}_T$ is the kernel of the surjective homomorphism $\kappa \circ h: \mathbb{C}[\theta_1,\theta_2,\cdots,\theta_r,\omega] \rightarrow H^*(M_{\xi})$.
\end{cor}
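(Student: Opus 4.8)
The strategy is to transport, through the surjection $h$ of \eqref{equivariant-cohomo-pn}, the description of $\ker(\kappa)$ provided by Theorem~\ref{kernel-kirwan-map-jeffrey-thm}, in exactly the way Kalkman's Theorem~\ref{kalkman-thm} upgrades Corollary~\ref{tolman-weitsman-circle-cor} in the circle case. Since $\kappa\circ h$ is a composite of surjections it is itself surjective, and $\ker(\kappa\circ h)=h^{-1}(\ker\kappa)$. By Theorem~\ref{kernel-kirwan-map-jeffrey-thm}, $\ker\kappa=\sum_{S\subset T}(Q_+^S+Q_-^S)$, the sum of ideals in $H^*_T(\mathbb{P}_n)$ over the generic circles $S$ with $M^S=M^T$ for which $\xi$ is a regular value of $\mu_S$. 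For a surjective ring homomorphism, taking preimages commutes with finite sums of ideals: writing $h(x)=\sum_S(i_S+j_S)$ with $i_S\in Q_+^S$ and $j_S\in Q_-^S$, lift each $i_S,j_S$ to a preimage and observe that $x$ differs from the sum of these lifts by an element of $\ker h\subseteq h^{-1}(Q_+^S)$, so $x\in\sum_S\big(h^{-1}(Q_+^S)+h^{-1}(Q_-^S)\big)$. Hence $\ker(\kappa\circ h)=\sum_S\big(h^{-1}(Q_+^S)+h^{-1}(Q_-^S)\big)$, and since $\tilde{Q}_T$ is by construction $\sum_S\big(\langle\tilde{Q}^S_{+}\rangle+\langle\tilde{Q}^S_{-}\rangle\big)$, it suffices to prove, for every such $S$ and each sign,
\[
h^{-1}(Q_\pm^S)=\langle\tilde{Q}^S_{\pm}\rangle\subset\mathbb{C}[\theta_1,\dots,\theta_r,\omega].
\]

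To establish this equality I would fix such an $S$ and set $J_+=\{\,j:\langle\mu_S(p_j),\alpha\rangle>\langle\xi,\alpha\rangle\,\}$, with $J_-$ the complementary set of indices among the fixed points $p_1,\dots,p_{n+1}$. Regularity of $\xi$ for $\mu_S$ rules out a fixed point on the wall, so $J_+\sqcup J_-=\{1,\dots,n+1\}$, and the characters $\chi_j:=\sum_i a_{i,j}\theta_i$ with $j\in J_+$ (respectively $j\in J_-$) are pairwise distinct linear forms, since the columns $\alpha_j$ of $A$ run over the $2^r$ distinct vertices of $\{\pm1\}^r$. Under the presentation \eqref{equivariant-cohomo-pn}, restriction of an equivariant class to $p_k$ is the $\mathbb{C}[\theta]$-algebra map sending $\omega$ to $-\chi_k$ (the root of the defining relation realised there), so $h(\omega+\chi_j)$ restricts at $p_k$ to $\chi_j-\chi_k$, which vanishes if and only if $j=k$. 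Consequently $h(\tilde{Q}^S_{+})$ restricts at $p_k$ to $\prod_{j\in J_+}(\chi_j-\chi_k)$, which vanishes precisely when $k\in J_+$; thus $h(\tilde{Q}^S_{+})\in Q_+^S$ and, $Q_+^S$ being an ideal, $\langle\tilde{Q}^S_{+}\rangle\subseteq h^{-1}(Q_+^S)$, and symmetrically for the minus sign.

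For the reverse inclusion I would use that $H^*_T(\mathbb{P}_n)=\mathbb{C}[\theta][\omega]/\langle f\rangle$ with $f(\omega)=\prod_{j=1}^{n+1}(\omega+\chi_j)$ monic of degree $n+1$ in $\omega$, hence free over $\mathbb{C}[\theta]$ on the basis $1,\omega,\dots,\omega^{\,n}$. Given $b\in\mathbb{C}[\theta][\omega]$ with $h(b)\in Q_+^S$, first reduce $b$ modulo $f$ so that $\deg_\omega b\le n$, then divide by the monic polynomial $g:=\tilde{Q}^S_{+}=\prod_{j\in J_+}(\omega+\chi_j)$, writing $b=qg+\rho$ with $q,\rho\in\mathbb{C}[\theta][\omega]$ and $\deg_\omega\rho<|J_+|$. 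Evaluating at $\omega=-\chi_k$ for $k\in J_+$ annihilates $b$ (because $h(b)\in Q_+^S$) and $qg$ (because $g(-\chi_k)=0$), so $\rho$ vanishes at the $|J_+|$ distinct elements $-\chi_k$ of the fraction field $K$ of $\mathbb{C}[\theta_1,\dots,\theta_r]$; a polynomial in $K[\omega]$ of degree $<|J_+|$ with $|J_+|$ roots is zero, whence $\rho=0$ and $b=qg\in\langle\tilde{Q}^S_{+}\rangle$. Undoing the initial reduction costs nothing, since $f=\tilde{Q}^S_{+}\,\tilde{Q}^S_{-}\in\langle\tilde{Q}^S_{+}\rangle$; the argument for $Q_-^S$ is identical with $J_-$ in place of $J_+$. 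Reassembling the summands gives $\ker(\kappa\circ h)=\sum_S\big(\langle\tilde{Q}^S_{+}\rangle+\langle\tilde{Q}^S_{-}\rangle\big)=\tilde{Q}_T$.

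The step I expect to be the genuine obstacle is precisely this reverse inclusion, namely the fact that the Tolman--Weitsman ideal $Q_\pm^S$ of equivariant classes vanishing on a sub-level (or super-level) subset of $F$ is actually \emph{principal}, cut out by the single product $\prod_{j\in J_\pm^S}(\omega+\chi_j)$. This uses crucially that $H^*_T(\mathbb{P}_n)$ is generated over $H^*_T(\mathrm{pt})$ by the single class $\omega$, together with the Vandermonde-type fact invoked above; it would fail for a general Hamiltonian $T$-manifold, where $Q_\pm^S$ need not be principal. A secondary point demanding care is genericity: one must verify that the circles $S$ entering the sum simultaneously realise $M^S=M^T$ and avoid the level $\langle\xi,\alpha\rangle$ at every fixed point, so that $J_+^S$ and $J_-^S$ genuinely partition $\{1,\dots,n+1\}$ --- exactly the hypotheses under which Theorem~\ref{kernel-kirwan-map-jeffrey-thm} is stated.
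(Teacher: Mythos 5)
Your proof is correct and shares the paper's top-level skeleton --- transport the description of $\ker\kappa$ from Theorem~\ref{kernel-kirwan-map-jeffrey-thm} through the presentation $h$ of Eq.~\eqref{equivariant-cohomo-pn} and assemble circle by circle --- but it fills in, from first principles, exactly the two steps the paper dispatches by citation. The paper's proof is two sentences: it invokes Kalkman's Theorem~\ref{kalkman-thm} for each circle and declares the identity $\ker(\kappa\circ h)=\sum_{S}\ker(\kappa_S\circ h_S)$ ``evident'' from the symmetric-function form of the defining relation. You instead (i) verify the ideal-theoretic bookkeeping that preimages under the surjection $h$ commute with the finite sum of ideals, absorbing $\ker h$ into each summand, and (ii) prove directly that $h^{-1}(Q_\pm^S)$ is the \emph{principal} ideal $\langle\tilde{Q}^S_\pm\rangle$, via evaluation $\omega\mapsto-\chi_k$ at the fixed points for the easy inclusion and division with remainder against the monic polynomial $\prod_{j\in J_\pm}(\omega+\chi_j)$ followed by a Vandermonde count of roots for the hard one. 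Step (ii) amounts to reproving Kalkman's theorem in the torus setting rather than citing it, and your closing diagnosis is exactly right: the argument hinges on $H^*_T(\mathbb{P}_n)$ being generated over $H^*_T(\mathrm{pt})$ by the single class $\omega$ with pairwise distinct characters $\chi_j$, which is what makes the Tolman--Weitsman ideals principal here and would fail for a general Hamiltonian $T$-manifold; likewise the genericity caveat on the circles $S$ (so that $M^S=M^T$ and $J_+\sqcup J_-$ exhausts the fixed points) is a real hypothesis of Theorem~\ref{kernel-kirwan-map-jeffrey-thm} that the paper does not re-examine. The only minor point to record is a normalization mismatch internal to the paper: Lemma~\ref{symplectic-form-lemma} restricts the equivariant class at $p_k$ to $+\langle\mu_T(p_k),\vec{\theta}\rangle$, whereas your evaluation uses $\omega\mapsto-\chi_k$; yours is the convention forced by the relation in Eq.~\eqref{equivariant-cohomo-pn} as written, so your algebra is internally consistent and the discrepancy does not affect the conclusion.
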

\begin{proof}
Considering theorems \ref{kalkman-thm} and \ref{kernel-kirwan-map-jeffrey-thm}, the only remaining part of the proof is to show that $Ker(h) = \sum_{S \subset T}{Ker(h_S)}$, where $h_S: H^*_S(pt)[\omega] \cong \mathbb{C}[\phi,\omega] \rightarrow H^*_{S^1}(\mathbb{P}_n)$. The latter is evident by writing the relation in Eq. \eqref{equivariant-cohomo-pn} in terms of elementary symmetric functions $\sigma_j$, as in Eq. \eqref{symmetric-pol-eq1}. In other words,
\begin{equation}
Ker(\kappa \circ h) = \sum_{S \subset T}{Ker(\kappa_S \circ h_S)} = \sum_{S \subset T}{\tilde{Q}^S_{+} \oplus \tilde{Q}^S_{-}}, 
\end{equation}
where $\tilde{Q}^S_{+}$ and $\tilde{Q}^S_{-}$ are given in Eqs. \eqref{torus-generators-eq2}, \eqref{torus-generators-eq3}, with characters $\chi_j=\sum_{i=1}^{r}{a_{i,j}\theta_i}$, for $j=1,\cdots,n+1$.

\end{proof}
\begin{rem}
For the action \eqref{action}, $p \in F$ actually means that for each $S \subset T$ we have exactly $2^{r-1}$ of the images of fixed points (vertices of the hypercube $\Delta$) falling into either side ($>$ or $<$) of the hyperplane parallel to the co-dimension one walls of the hypercube (i.e. the moment polytope) containing $\xi \in \Delta_{\text{reg}}$.
\end{rem}

\subsection{Examples}
Now, let's consider the case $r=2$, namely the torus $T^2=S \times S$ acting in a Hamiltonian fashion on the K\"ahler manifold $M=\mathbb{P}_3$ via the homomorphism \eqref{action}. We can find generators of the kernel of surjective homomorphism $\kappa \circ h: \mathbb{C}[\theta_1,\theta_2,\omega] \rightarrow H^*(M_{\xi})$ as follows:
\begin{itemize}
\item Recall that the matrix $a_{i,j}\equiv A$ is given by
\begin{equation} \label{a-matrix-r=2}
A \equiv a_{i,j} = \left( \begin{matrix}
1 & 1 & -1 & -1 \\
1 & -1 & 1 & -1
\end{matrix}\right)_{2 \times 4}.
\end{equation}
\item The moment polytope is a square (2-cube) with 4 vertices as the image under the moment map $\mu_T : \mathbb{P}_3 \rightarrow \mathfrak{t}^*$. The symplectic quotient $M_{\xi} = \mu_T^{-1}(\xi)/T^2$, with $\xi \in \Delta_{\text{reg}}$ possesses at most orbifold singularities. 
\item Recalling the corollary \ref{torus-kalkman-cor} and Eq. \eqref{a-matrix-r=2}, for each 1-dimensional torus (circle) $S$ we have
\begin{itemize}
\item For first circle $S$ action with $\alpha_1 \in \mathfrak{s}^*$:
\begin{equation} \label{kernel-generators-r=2-eq1}
\tilde{Q}^S_{+,1} = \prod_{\substack{p \in \mathcal{F} \\ \langle \mu_S(p), \alpha_1 \rangle > \langle \xi,\alpha_1 \rangle}}{ \left( \omega + \sum_{i=1}^{2}{a_{i,j}\theta_i} \right) }=(\omega + \theta_1 + \theta_2)(\omega + \theta_1 - \theta_2),
\end{equation}
\begin{equation} \label{kernel-generators-r=2-eq2}
\tilde{Q}^S_{-,1} = \prod_{\substack{p \in \mathcal{F} \\ \langle \mu_S(p), \alpha_1 \rangle < \langle \xi,\alpha_1 \rangle}}{ \left( \omega + \sum_{i=1}^{2}{a_{i,j}\theta_i} \right) }=(\omega - \theta_1 + \theta_2)(\omega - \theta_1 - \theta_2).
\end{equation}
\item For second circle $S$ action with $\alpha_2 \in \mathfrak{s}^*$:
\begin{equation} \label{kernel-generators-r=2-eq3}
\tilde{Q}^S_{+,2} = \prod_{\substack{p \in \mathcal{F} \\ \langle \mu_S(p), \alpha_2 \rangle > \langle \xi,\alpha_2 \rangle}}{ \left( \omega + \sum_{i=1}^{2}{a_{i,j}\theta_i} \right) }=(\omega + \theta_1 + \theta_2)(\omega - \theta_1 + \theta_2),
\end{equation}
\begin{equation} \label{kernel-generators-r=2-eq4}
\tilde{Q}^S_{-,2} = \prod_{\substack{p \in \mathcal{F} \\ \langle \mu_S(p), \alpha_2 \rangle < \langle \xi,\alpha_2 \rangle}}{ \left( \omega + \sum_{i=1}^{2}{a_{i,j}\theta_i} \right) }=(\omega + \theta_1 - \theta_2)(\omega - \theta_1 - \theta_2).
\end{equation}
\end{itemize}
\item In terms of elementary symmetric polynomials $\sigma_j$ in Eq. \eqref{symmetric-pol-eq1} and recalling the fact that $\chi_1=\theta_1 + \theta_2, \, \chi_2 = \theta_1 - \theta_2, \, \chi_3 = -\theta_1 + \theta_2, \, \chi_4 =- \theta_1 - \theta_2$, we have
\begin{eqnarray*}
\tilde{Q}^S_{+,1} & = & \sigma_1(\chi_1,\chi_2,\chi_3,\chi_4) \, \omega^{3} + \sigma_2(\chi_1,\chi_2,\chi_3,\chi_4) \, \omega^{2} \\ \nonumber
& = & \omega^2 \left(\omega  \chi_3+\omega  \chi_4+\chi_2 \left(\omega +\chi_3+\chi_4\right)+\chi_1 \left(\omega +\chi_2+\chi_3+\chi_4\right)+\chi_4 \chi_3\right) \\ \nonumber
& = & -2 \, \omega^2 \left( \sigma_1^2(\theta_1,\theta_2) -2\, \sigma_2(\theta_1,\theta_2) \right),
\\ \nonumber \\
\tilde{Q}^S_{-,1} & = & \sigma_3(\chi_1,\chi_2,\chi_3,\chi_4) \, \omega + \sigma_4(\chi_1,\chi_2,\chi_3,\chi_4)  \\ \nonumber
& = & \omega  \left(\chi_1 \chi_2 \chi_3+\chi_1 \chi_4 \chi_3+\chi_2 \chi_4 \chi_3+\chi_1 \chi_2 \chi_4\right)+\chi_1 \chi_2 \chi_3 \chi_4 \\ \nonumber
& = & -4 \,  \sigma_1^2(\theta_1,\theta_2)\, \sigma_2(\theta_1,\theta_2) + \sigma_1^4(\theta_1,\theta_2),
\\ \nonumber \\
\tilde{Q}^S_{+,2} & = & \sigma_1(\chi_1,\chi_2,\chi_3,\chi_4) \, \omega^{3} + \sigma_3(\chi_1,\chi_2,\chi_3,\chi_4) \, \omega \\ \nonumber
& = & \omega^3 \left(\chi_1+\chi_2+\chi_3+\chi_4\right)+\omega  \left(\chi_1 \chi_2 \chi_3+\chi_1 \chi_4 \chi_3+\chi_2 \chi_4 \chi_3+\chi_1 \chi_2 \chi_4\right) \\ \nonumber
& = & 0,
\\ \nonumber \\
\tilde{Q}^S_{-,2} & = & \sigma_2(\chi_1,\chi_2,\chi_3,\chi_4) \, \omega^{2} + \sigma_4(\chi_1,\chi_2,\chi_3,\chi_4) \\ \nonumber
& = & \omega^2 \left(\chi_1 \chi_2+\chi_3 \chi_2+\chi_4 \chi_2+\chi_1 \chi_3+\chi_1 \chi_4+\chi_3 \chi_4\right)+\chi_1 \chi_2 \chi_3 \chi_4 \\ \nonumber
& = &  \sigma_1^4(\theta_1,\theta_2) - \sigma_1^2(\theta_1,\theta_2)\, \left( 2 \, \omega^2 + 4 \, \sigma_2(\theta_1,\theta_2) \right) + 4 \, \omega^2 \, \sigma_2(\theta_1,\theta_2).
\end{eqnarray*}
\item Therefore, if $\tilde{Q} \subset \mathbb{C}[\theta_1,\theta_2,\omega]$ denotes the ideal generated by $\tilde{Q}^S_{\pm,1}, \tilde{Q}^S_{\pm,2}$ in Eqs. \eqref{kernel-generators-r=2-eq1}--\eqref{kernel-generators-r=2-eq4}, then we will have
\begin{equation}
H^*(M_{\xi}) = \mathbb{C}[\theta_1,\theta_2,\omega]/\tilde{Q}, \quad \xi \in \Delta_{\text{reg}},
\end{equation}
where
\[
\tilde{Q} = \sum_{l=1}^{2}{\tilde{Q}^S_{+,l} \oplus \tilde{Q}^S_{-,l}}.
\]
\end{itemize}

In the next example we consider the case $r=3$, namely the torus $T^3=S \times S \times S$ acting in a Hamiltonian fashion on the K\"ahler manifold $M=\mathbb{P}_7$ via the homomorphism \eqref{action}. We can find generators of the kernel of surjective homomorphism $\kappa \circ h: \mathbb{C}[\theta_1,\theta_2,\theta_3,\omega] \rightarrow H^*(M_{\xi})$ as follows:
\begin{itemize}
\item Recall that the matrix $a_{i,j}\equiv A$ is given by
\begin{equation} \label{a-matrix-r=3}
A \equiv a_{i,j} = \left( \begin{matrix}
1 & 1 & 1 & -1 & 1 & -1 & -1 & -1\\
1 & 1 & -1 & 1 & -1 & 1 & -1 & -1 \\
1 & -1 & 1 & 1 & -1 & -1 & 1 & -1
\end{matrix}\right)_{3 \times 8}.
\end{equation}
\item The moment polytope is a cube (3-cube) with 8 vertices as the image under the moment map $\mu_T : \mathbb{P}_7 \rightarrow \mathfrak{t}^*$. The symplectic quotient $M_{\xi} = \mu_T^{-1}(\xi)/T^3$, with $\xi \in \Delta_{\text{reg}}$ as shown in Figure. \ref{3-cube-example}, possesses at most orbifold singularities. 
\item Recalling the corollary \ref{torus-kalkman-cor} and Eq. \eqref{a-matrix-r=3}, for each 1-dimensional torus (circle) $S$ we have
\begin{itemize}
\item For first circle $S$ action with $\alpha_1 \in \mathfrak{s}^*$:
\begin{eqnarray} \label{kernel-generators-r=3-eq1}
\tilde{Q}^S_{+,1} & = & \prod_{\substack{p \in \mathcal{F} \\ \langle \mu_S(p), \alpha_1 \rangle > \langle \xi,\alpha_1 \rangle}}{ \left( \omega + \sum_{i=1}^{3}{a_{i,j}\theta_i} \right) } \\ \nonumber 
& = & (\omega + \theta_1 + \theta_2 + \theta_3)(\omega + \theta_1 + \theta_2 - \theta_3)(\omega + \theta_1 - \theta_2 + \theta_3)(\omega + \theta_1 - \theta_2 - \theta_3),
\end{eqnarray}
\begin{eqnarray} \label{kernel-generators-r=3-eq2}
\tilde{Q}^S_{-,1} & = & \prod_{\substack{p \in \mathcal{F} \\ \langle \mu_S(p), \alpha_1 \rangle < \langle \xi,\alpha_1 \rangle}}{ \left( \omega + \sum_{i=1}^{3}{a_{i,j}\theta_i} \right) } \\ \nonumber 
& = & (\omega - \theta_1 + \theta_2 + \theta_3)(\omega - \theta_1 + \theta_2 - \theta_3)(\omega - \theta_1 - \theta_2 + \theta_3)(\omega - \theta_1 - \theta_2 - \theta_3).
\end{eqnarray}

\item For second circle $S$ action with $\alpha_2 \in \mathfrak{s}^*$:
\begin{eqnarray} \label{kernel-generators-r=3-eq3}
\tilde{Q}^S_{+,2} & = & \prod_{\substack{p \in \mathcal{F} \\ \langle \mu_S(p), \alpha_2 \rangle > \langle \xi,\alpha_2 \rangle}}{ \left( \omega + \sum_{i=1}^{3}{a_{i,j}\theta_i} \right) } \\ \nonumber 
& = & (\omega + \theta_1 + \theta_2 + \theta_3)(\omega + \theta_1 + \theta_2 - \theta_3)(\omega - \theta_1 + \theta_2 + \theta_3)(\omega - \theta_1 + \theta_2 - \theta_3),
\end{eqnarray}
\begin{eqnarray} \label{kernel-generators-r=3-eq4}
\tilde{Q}^S_{-,2} & = & \prod_{\substack{p \in \mathcal{F} \\ \langle \mu_S(p), \alpha_2 \rangle < \langle \xi,\alpha_2 \rangle}}{ \left( \omega + \sum_{i=1}^{3}{a_{i,j}\theta_i} \right) } \\ \nonumber 
& = & (\omega + \theta_1 - \theta_2 + \theta_3)(\omega + \theta_1 - \theta_2 - \theta_3)(\omega - \theta_1 - \theta_2 + \theta_3)(\omega - \theta_1 - \theta_2 - \theta_3).
\end{eqnarray}

\item For third circle $S$ action with $\alpha_3 \in \mathfrak{s}^*$:
\begin{eqnarray} \label{kernel-generators-r=3-eq5}
\tilde{Q}^S_{+,3} & = & \prod_{\substack{p \in \mathcal{F} \\ \langle \mu_S(p), \alpha_3 \rangle > \langle \xi,\alpha_3 \rangle}}{ \left( \omega + \sum_{i=1}^{3}{a_{i,j}\theta_i} \right) } \\ \nonumber 
& = & (\omega + \theta_1 + \theta_2 + \theta_3)(\omega + \theta_1 - \theta_2 + \theta_3)(\omega - \theta_1 + \theta_2 + \theta_3)(\omega - \theta_1 - \theta_2 + \theta_3),
\end{eqnarray}
\begin{eqnarray} \label{kernel-generators-r=3-eq6}
\tilde{Q}^S_{-,3} & = & \prod_{\substack{p \in \mathcal{F} \\ \langle \mu_S(p), \alpha_3 \rangle < \langle \xi,\alpha_3 \rangle}}{ \left( \omega + \sum_{i=1}^{3}{a_{i,j}\theta_i} \right) } \\ \nonumber 
& = & (\omega + \theta_1 + \theta_2 - \theta_3)(\omega + \theta_1 - \theta_2 - \theta_3)(\omega - \theta_1 + \theta_2 - \theta_3)(\omega - \theta_1 - \theta_2 - \theta_3).
\end{eqnarray}
\end{itemize}

\item In terms of elementary symmetric polynomials $\sigma_j$ in Eq. \eqref{symmetric-pol-eq1} and recalling the fact that $\chi_1=\theta_1 + \theta_2 + \theta_3, \, \chi_2 = \theta_1 + \theta_2 - \theta_3, \, \chi_3 = \theta_1 - \theta_2 + \theta_3, \, \chi_4 =- \theta_1 + \theta_2 + \theta_3, \chi_5=\theta_1 -\theta_2 - \theta_3, \chi_6 = -\theta_1 + \theta_2 -\theta_3 ,\chi_7=-\theta_1 - \theta_2 +\theta_3, \chi_8 = -\theta_1 - \theta_2 - \theta_3$, we have
\begin{eqnarray*}
\tilde{Q}^S_{+,1} & = & \sigma_1(\chi_{\bullet}) \, \omega^{7} + \sigma_2(\chi_{\bullet}) \, \omega^{6} + \sigma_3(\chi_{\bullet}) \, \omega^{5} + \sigma_5(\chi_{\bullet}) \, \omega^{3} \\ \nonumber
& = & -4 \, \omega^6  \sigma_1^2(\theta_{\bullet}) + 8 \, \omega^6 \sigma_2(\theta_{\bullet}),
\\ \nonumber \\
\tilde{Q}^S_{-,1} & = & \sigma_4(\chi_{\bullet}) \, \omega^{4} + \sigma_6(\chi_{\bullet}) \, \omega^{2} + \sigma_7(\chi_{\bullet}) \, \omega^{1} + \sigma_8(\chi_{\bullet}) \, \omega^{0} \\ \nonumber
& = & -4 \sigma_1^6(\theta_{\bullet}) \left(2 \sigma_2(\theta_{\bullet})+\omega^2\right)-32 \sigma_3(\theta_{\bullet}) \sigma_1^3(\theta_{\bullet}) \left(2 \sigma_2(\theta_{\bullet})+\omega^2\right)+ \\ \nonumber 
& & 16 \sigma_3(\theta_{\bullet}) \sigma_1(\theta_{\bullet}) \omega^2 \left(4 \sigma_2(\theta_{\bullet})+\omega^2\right)+ 16 \omega^2 \left(\sigma_2^2(\theta_{\bullet}) \omega^2-4 \sigma_3^2(\theta_{\bullet})\right)+ \\ \nonumber
& & 2 \sigma_1^4(\theta_{\bullet}) \left(12 \sigma_2(\theta_{\bullet}) \omega^2+8 \sigma_2^2(\theta_{\bullet})+3 \omega^4\right)-8 \sigma_1^2(\theta_{\bullet}) \left(3 \sigma_2(\theta_{\bullet}) \omega^4+4 \sigma_2^2(\theta_{\bullet}) \omega^2-8 \sigma_3^2(\theta_{\bullet})\right)+ \\ \nonumber 
& & \sigma_1^8(\theta_{\bullet})+16 \sigma_3(\theta_{\bullet}) \sigma_1^5(\theta_{\bullet}),
\\ \nonumber \\
\tilde{Q}^S_{+,2} & = & \sigma_1(\chi_{\bullet}) \, \omega^{7} + \sigma_2(\chi_{\bullet}) \, \omega^{6} + \sigma_4(\chi_{\bullet}) \, \omega^{4} + \sigma_6(\chi_{\bullet}) \, \omega^{2} \\ \nonumber
& = & 2 \omega^2 ( 3 \sigma_1^4(\theta_{\bullet}) \left(4 \sigma_2(\theta_{\bullet})+\omega^2\right)+ 8 \sigma_3(\theta_{\bullet}) \sigma_1(\theta_{\bullet}) \left(4 \sigma_2(\theta_{\bullet})+\omega^2\right)- \nonumber \\
& & 2 \sigma_1^2(\theta_{\bullet}) \left(6 \sigma_2(\theta_{\bullet}) \omega^2+8 \sigma_2^2+\omega^4\right)+ 4 \left(\sigma_2(\theta_{\bullet}) \omega^4+2 \sigma_2^2(\theta_{\bullet}) \omega^2-8 \sigma_3^2(\theta_{\bullet})\right)- \\ \nonumber 
& & 2 \sigma_1^6(\theta_{\bullet})-16 \sigma_3(\theta_{\bullet}) \sigma_1^3(\theta_{\bullet}) ),
\\ \nonumber \\
\tilde{Q}^S_{-,2} & = & \sigma_3(\chi_{\bullet}) \, \omega^{5} + \sigma_5(\chi_{\bullet}) \, \omega^{3} + \sigma_7(\chi_{\bullet}) \, \omega^{1} + \sigma_8(\chi_{\bullet}) \, \omega^{0} \\ \nonumber
& = & \sigma_1^2(\theta_{\bullet}) \left(\sigma_1^3(\theta_{\bullet})-4 \sigma_2(\theta_{\bullet}) \sigma_1(\theta_{\bullet})+8 \sigma_3(\theta_{\bullet})\right)^2,
\\ \nonumber \\
\tilde{Q}^S_{+,3} & = & \sigma_1(\chi_{\bullet}) \, \omega^{7} + \sigma_3(\chi_{\bullet}) \, \omega^{5} + \sigma_4(\chi_{\bullet}) \, \omega^{4} + \sigma_7(\chi_{\bullet}) \, \omega^{1} \\ \nonumber
& = & 2 \left(3 \sigma_1^4(\theta_{\bullet})-12 \sigma_2(\theta_{\bullet}) \sigma_1^2(\theta_{\bullet})+8 \sigma_3(\theta_{\bullet}) \sigma_1(\theta_{\bullet})+8 \sigma_2^2(\theta_{\bullet})\right) \omega^4,
\\ \nonumber \\
\tilde{Q}^S_{-,3} & = & \sigma_2(\chi_{\bullet}) \, \omega^{6} + \sigma_5(\chi_{\bullet}) \, \omega^{3} + \sigma_6(\chi_{\bullet}) \, \omega^{2} + \sigma_8(\chi_{\bullet}) \, \omega^{0} \\ \nonumber
& = & -4 \sigma_1^6(\theta_{\bullet}) \left(2 \sigma_2(\theta_{\bullet})+\omega^2\right)+8 \sigma_2(\theta_{\bullet}) \sigma_1^4(\theta_{\bullet}) \left(2 \sigma_2(\theta_{\bullet})+3 \omega^2\right)- \\ \nonumber 
& & 32 \sigma_3(\theta_{\bullet}) \sigma_1^3(\theta_{\bullet}) \left(2 \sigma_2(\theta_{\bullet})+\omega^2\right)+64 \sigma_2(\theta_{\bullet}) \sigma_3(\theta_{\bullet}) \sigma_1(\theta_{\bullet}) \omega^2- \\ \nonumber 
& & 4 \sigma_1^2(\theta_{\bullet}) \left(8 \sigma_2^2(\theta_{\bullet}) \omega^2-16 \sigma_3^2(\theta_{\bullet})+\omega^6\right)+8 \omega^2 \left(\sigma_2(\theta_{\bullet}) \omega^4-8 \sigma_3^2(\theta_{\bullet})\right)+ \\ \nonumber 
& & \sigma_1^8(\theta_{\bullet})+16 \sigma_3(\theta_{\bullet}) \sigma_1^5(\theta_{\bullet}).
\end{eqnarray*}

\item Therefore, with $\tilde{Q} \subset \mathbb{C}[\theta_1,\theta_2,\theta_3,\omega]$ being the ideal generated by $\tilde{Q}^S_{\pm,1}, \tilde{Q}^S_{\pm,2}, \tilde{Q}^S_{\pm,3}$ in Eqs. \eqref{kernel-generators-r=3-eq1}--\eqref{kernel-generators-r=3-eq6}, then we have
\begin{equation}
H^*(M_{\xi}) = \mathbb{C}[\theta_1,\theta_2,\theta_3,\omega]/\tilde{Q}, \quad \xi \in \Delta_{\text{reg}},
\end{equation}
where
\[
\tilde{Q} = \sum_{l=1}^{3}{\tilde{Q}^S_{+,l} \oplus \tilde{Q}^S_{-,l}}.
\]
\end{itemize}

\section{Recursive Wall-crossing Integrals} \label{recursive-wall-crossing-integrals}

\subsection{The Localization Algorithm} \label{localization-sub-sec}
The main goal of this section is to obtain a recursive wall-crossing localization algorithm to compute integrals of a cohomology class $\kappa (a) \in H^*(M_{\xi})$ over the abelian symplectic quotients $M_{\xi}$, where $\kappa:H^*_T(M) \rightarrow H^*(M_{\xi})$ is the Kirwan map \eqref{kirwan-map-eq} with an equivariant cohomology class $a \in H^*_T(M)$, as $\xi \in \Delta_{\text{reg}}$ crosses critical walls of the moment polytope $\Delta \equiv \mu_T(M)$. Using the localization techniques to compute such integrals have been studied before, for instance \cite{guillemin1996,jeffrey1997,martin2000a}. In this section, we will apply the following recursive wall-crossing formula \cite[Theorem 3.1]{guillemin1996}
\begin{equation} \label{integral-wall-crossing-eq}
\int_{M_{\xi}} \kappa(a) = \sum^{\prime}_{i} \int_{M^{(i)}_{r-1}} \kappa_{r-1}[(\text{res})_{(r-1)}(a)],
\end{equation}
where $a \in H^{2m}_T(M)$ with $m = \text{dim}_{\mathbb{C}}(M_{\xi})$ and $M^{(i)}_{r-1} = (\mu^{-1}_{T^{r-1}}(q^{(i)}_{r-1}) \, \cap \, F^i_{r-1})/(T^{r-1})$ and an ``$i$'' occurs in the summation on the right hand side of Eq. \eqref{integral-wall-crossing-eq} if and only if $F_{r-1}^i \, \cap \, \mu_T^{-1}(l_{r-1}(\xi))$ intersects co-dimension-$1$ critical walls of the moment polytope $\Delta$ at the points $q^{(i)}_{r-1}$. Here, $F_{r-1}^i$ denotes the connected components of fixed point set of a one-parameter subgroup $T^1$ of $T^r$ and $l_{r-1}(\xi)$ is an appropriately chosen ray through $\xi \in \Delta_{\text{reg}}$ and along a non-zero element of the weight lattice of $\mathfrak{t}^*_1$. In fact, $\mu_T(F^i_{r-1})$ form co-dimension-$1$ walls of the moment polytope $\Delta$ and $F^i_{r-1}$ is a symplectic manifold under the effective and Hamiltonian action of the quotient torus $T^r/T^1 \equiv T^{r-1}$. Then, the Kirwan map for the sub-reductions are $\kappa_{r-1}: H^*_{T^{r-1}}(F^i_{r-1}) \rightarrow H^*(M_{r-1}^{(i)})$ and the corresponding residue operation in Eq. \eqref{integral-wall-crossing-eq} is the following map
\begin{equation} \label{residue-map-eq}
(\text{res})_{(r-1)}: H^*_{T^r}(M) \rightarrow H^*_{T^{r-1}}(F^i_{r-1}), (\text{res})_{(r-1)}(a) \mapsto (\text{res})_{\theta_1=0}(\imath^*_F a),
\end{equation}
where $\imath^*_F(a) \in H^*_{T^{r}}(F^i_{r-1}) = H^*_{T^{r-1}}(F^i_{r-1}) \otimes H^*_{T^1}$ and $\theta_1$ is a basis element for $\mathfrak{t}^*_1 \cong \mathbb{R}$. Recall that the infinitesimal orbit-type stratification \cite{saeid2015a} decomposes the symplectic manifold $M$ into the connected components $F_{r-k}$ of the fixed point sets of the sub-tori $T^k$ actions, for $k=0, \cdots, r$. More precisely, we have 
\begin{equation}
F \equiv F_0 \subset F_1 \subset \cdots \subset F_{r-1} \subset F_r \cong M,
\end{equation}
where each submanifold $F_{r-k}$ itself is a symplectic manifold under the Hamiltonian action of the quotient torus $T^r/T^k \cong T^{r-k}$ and is equipped with an equivariant moment map $\mu_{r-k}:F_{r-k} \rightarrow \mathfrak{t}^*_{r-k}$, where $\mathfrak{t}^*_{r-k}$ is the dual to the Lie algebra $\mathfrak{t}_{r-k}$ of the torus $T^{r-k}$. It is important to note that the set of regular values of the moment map $\mu_{T^r}$ is the complement of the union of the critical walls of the moment polytope $\mu_{T^r}(M)= \Delta$, or
\[
\Delta_{\text{reg}} = \Delta \backslash \bigcup_s \mu_{s}(F_s).
\]
In \cite{guillemin1996}, Guillemin and Kalkman used a certain combinatorial object called ``dendrite'' to iteratively express $\int_{M_\xi}\kappa(a)$ in terms of the integration over the connected components of $T^r$ fixed point set $M^{T^r} \cong F$. In particular, the recursion starts by considering the symplectic quotient $M_{\xi}$ at a desired point $\xi \in \Delta_{\text{reg}}$ and follows by a finite collection $D$ of certain tuples $(l_s(\xi),W_s)$, for $s=0, \cdots, r-1$, where $W_s = \mu_{s}(F_s)$ is a sub-polytope, or a critical wall of the moment polytope $\Delta$ and $l_s(\xi)$ is a ray (one dimensional shifted cone, i.e. $l_s(\xi) = \left\{ \xi + \sum_{s} t_s \, \theta_s \, | \, 0 \leq t_s < \infty \right\}$, where $\left\{ \theta_s \right\}$ forms a basis for the weight lattice of $\mathfrak{t}^*_{s}$), whose intersection with $\Delta \cap W_s$ give rises to a finite collection of points $q_s$. Then $D$ is called a dendrite, if the following conditions hold \cite{jeffrey2005b}:
\begin{enumerate}
	\item For $(l_s(\xi),W_s=\mu_{s}(F_s))$, the ray $l_s(\xi)$ is transverse to the $\mu_{s}(F_s)$,
	\item For $(l_s(\xi),W_s=\mu_{s}(F_s))$, if $l_s(\xi)$ intersects a co-dimension one wall $W^{\prime}_s$ of the $\mu_{s}(F_s)$ at a point $q^{\prime}_s$, then there is a unique cone $l^{\prime}_s(\xi)$ such that $(l^{\prime}_s(\xi),W^{\prime}_s) \in D$.
\end{enumerate}
Therefore, a dendrite will consist of a main branch (the ray $l_{r-1}(\xi)$ through $\xi \in \Delta_{\text{reg}}$ and $q_{r-1} \in \Delta \cap W_{r-1}$), the secondary branches (the rays $l^k_{r-2}(\xi)$ through $q_{r-1}$ and $q_{r-2} \in \Delta \cap W_{r-2}$), \dots until it terminates at the vertices, or the $\mu_T(F)$. Moreover, for every Hamiltonian submanifold $(F_s,\mu_{s},T^{s})$, there exists a moment sub-polytope $\mu_{s}(F_s) \subset \Delta$, whose set of regular values $\Delta^{(s)}_{\text{reg}}$ are given by the complement of the union of the critical sub-walls as follows
\[
\Delta^{(s)}_{\text{reg}} = \mu_{s}(F_s) \backslash \bigcup_{F_{s^{\prime}}<F_s} \mu_{s^{\prime}}(F_{s^{\prime}}),
\]
which itself is a relatively open set with finite number of components \cite{saeid2015a}. In other words, for each of the points $q_s \in \Delta^{(s)}_{\text{reg}}$ we can define a regular symplectic quotient defined by the following sub-reductions
\begin{equation} \label{symplectic-subreduction-eq}
M_{q_s} = \left( \mu^{-1}_{s}(q_s) \cap F_s \right)/ (T^{s}), \quad s = 0 , \cdots, r-1. 
\end{equation}
Now, consider the formula \eqref{integral-wall-crossing-eq}, in which every term in the summation in the right-hand side corresponds to the symplectic quotient of the kind $M_{q_{r-1}} \equiv M^{(i)}_{r-1}$, for which repeating the same procedure would lead to the following
\begin{equation} \label{symplectic-subreduction-eq1}
\int_{M^{(i)}_{r-1}} \kappa_{(r-1)}[(\text{res})_{(r-1)}(a)] = \sum_{i^{\prime}}\int_{M^{(i^{\prime})}_{r-2}} \kappa_{(r-2)}[(\text{res})_{(r-2)}(a)],
\end{equation}  
where the $i^{\prime}$ occurs in the summation if and only if $\mu^{-1}_T(l_{r-2}(\xi)) \cap F_{r-2}^{i^{\prime}}$ intersects the co-dimension one critical walls of the moment subpolytope $\Delta^{(r-1)}$ in the points $q_{r-2}^{i^{\prime}}$ and also $M^{(i^{\prime})}_{r-2} \equiv M_{q_{r-2}}$. Obviously, we can repeat the same procedure for each term in the right-hand side of the Eq. \eqref{symplectic-subreduction-eq1}, i.e. for integrals over symplectic quotients in terms of sub-reductions at certain points that are determined by the dendrite. The recursion will terminate at the fixed point set $F \equiv M^{T^r}$ of the $T^r$-action, or the vertices of the moment polytope $\Delta$. More specifically, a dendrite $D$ consists of certain paths $P$, given by a sequence of tuples
\begin{equation} \label{dendrite-path-eq}
\left( (l_{r-1}(\xi),W_{r-1}), \, (l_{r-2}(\xi),W_{r-2}), \, \cdots , (l_0(\xi),W_0 \in F) \right).
\end{equation}

Let $a \in H^*_T(M)$ and for an appropriately chosen set of coordinates $\left\{ \theta_{s+1} \right\}$ on the Lie algebra of $T^{s+1}$, for $s=0, \cdots , r-1$, define
\begin{equation} \label{residue-eq1}
R_s(a) = \text{res}_{\theta_{s+1}=0} \left( \frac{ \imath^*_{F}(a)}{e_T(\nu_s)} \right), \quad s=0, \cdots , r-1,
\end{equation}
where $\imath^*_{F}$ is the natural inclusion $F \equiv M^{T^r} \hookrightarrow M$ and $\nu_s$ is the restriction to $F$ of the normal bundle to $F_s$ and hence $e_T(\nu_s)$ the corresponding equivariant Euler class of $\nu_s$ defined by
\begin{equation} \label{equivariant-euler-class-general-eq}
e_T(\nu_s) =\prod_{k=0}^{s}e_T(\nu_k),
\end{equation}
where the normal bundle $\nu \equiv \nu_0$ of $F$ in $M$ is the sum $\nu = \bigoplus_s \nu_s$, according to the splitting principle \cite{bott1982}. 
Therefore, the contribution $C_P$ to $\int_{M_{\xi}}{\kappa(a)}$ for a path $P$ is given by the following localization formula
\begin{equation} \label{iterated-residues-general-eq}
C_P(\imath^*_{F}(a)) = R_{0} \circ R_{1} \circ \cdots \circ R_{r-1}(\imath^*_{F}(a)).
\end{equation}
\begin{thm} \label{iterated-residues-thm}
\cite{guillemin1996,jeffrey2005b} For a dendrite $D$ and $a \in H^*_T(M)$,
\begin{equation}
\int_{M_{\xi}}{\kappa(a)} = \sum_{P}{\int_{F_P}{C_P(\imath^*_{F_P}(a))}},
\end{equation}
where the above sum is over all possible paths $P$ belonging to the dendrite $D$.
\end{thm}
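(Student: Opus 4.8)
The plan is to argue by induction on the rank $r$ of the torus $T=T^r$, using the single-step wall-crossing identity \eqref{integral-wall-crossing-eq} (i.e.\ \cite[Theorem 3.1]{guillemin1996}) as the engine of the recursion and the combinatorics of a dendrite $D$ as the device that organizes it. For the base case $r=1$ the torus is a circle $S$, the dendrite reduces to a single ray $l_0(\xi)$, and I would push the parameter along $l_0(\xi)$ towards the boundary of the one-dimensional moment image $\mu_S(M)$. Each time the parameter crosses the image of a fixed component $f\in F$, Kalkman's wall-crossing formula changes $\int_{\mu_S^{-1}(\cdot)/S}\kappa(a)$ by precisely $\int_f R_0(\imath_f^*a)$, with $R_0$ the residue of \eqref{residue-eq1}; since a generic ray eventually leaves $\mu_S(M)$, where the reduced space is empty and the integral vanishes, summing the jumps gives $\int_{M_\xi}\kappa(a)=\sum_f\int_f R_0(\imath_f^*a)$, which is the asserted formula for $r=1$ with paths $P=(l_0(\xi),\{f\})$.

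For the inductive step, assume the theorem for all Hamiltonian tori of rank $<r$. I would fix a generic one-parameter subgroup $T^1\subset T^r$ with coordinate $\theta_1$ so that $M^{T^1}=\bigsqcup_i F_{r-1}^i$, each $F_{r-1}^i$ being a compact symplectic manifold carrying an effective Hamiltonian action of $T^{r-1}=T^r/T^1$, with moment map $\mu_{r-1}$ and moment subpolytope $W_{r-1}^i=\mu_{r-1}(F_{r-1}^i)$. One application of \eqref{integral-wall-crossing-eq} then gives
\begin{equation*}
\int_{M_\xi}\kappa(a)=\sum^{\prime}_i\int_{M_{r-1}^{(i)}}\kappa_{r-1}\big[(\mathrm{res})_{(r-1)}(a)\big],
\end{equation*}
with $(\mathrm{res})_{(r-1)}(a)=\mathrm{res}_{\theta_1=0}(\imath^*_{F_{r-1}^i}a)\in H^*_{T^{r-1}}(F_{r-1}^i)$ and $M_{r-1}^{(i)}=M_{q_{r-1}}$ a symplectic quotient of $F_{r-1}^i$ at a regular value $q_{r-1}\in\Delta^{(r-1)}_{\mathrm{reg}}$. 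Since $F_{r-1}^i$ has rank $r-1$, the inductive hypothesis applies to each term, expressing $\int_{M_{r-1}^{(i)}}\kappa_{r-1}[(\mathrm{res})_{(r-1)}(a)]$ as a sum over paths $P'$ in a dendrite $D^{(i)}$ for $(F_{r-1}^i,\mu_{r-1},T^{r-1})$ of iterated-residue contributions. Concatenating the leading tuple $(l_{r-1}(\xi),W_{r-1}^i)$ with each $P'$ produces exactly the paths $P$ of the full dendrite $D$ that terminate at $F=M^{T^r}$; the dendrite conditions (1)--(2) hold because transversality of $l_{r-1}(\xi)$ to the codimension-one walls is the genericity of the ray, and each later sub-ray of $P'$ supplies the cone required at the wall it meets.

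It then remains to identify the contribution $C_P=R_0\circ R_1\circ\cdots\circ R_{r-1}$ of \eqref{iterated-residues-general-eq}, in which \emph{all} restrictions are taken to the maximal fixed set $F$, with the composition that emerges from the recursion, where the first residue is taken after restricting to $F_{r-1}^i$ and the remaining ones after restricting to smaller fixed submanifolds of $F_{r-1}^i$. This is where I would invoke the splitting principle: the $T^r$-equivariant decomposition $\nu=\bigoplus_s\nu_s$ of the normal bundle of $F$ in $M$ gives $e_T(\nu_s)=\prod_{k\le s}e_T(\nu_k)$ as in \eqref{equivariant-euler-class-general-eq}, while functoriality of restriction, $\imath^*_{F\hookrightarrow M}=\imath^*_{F\hookrightarrow F_{r-1}^i}\circ\imath^*_{F_{r-1}^i\hookrightarrow M}$, lets one peel off one circle factor at a time. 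With the $\theta_j$ chosen along a basis adapted to the flag $F\subset F_1\subset\cdots\subset F_{r-1}\subset M$, the residue $\mathrm{res}_{\theta_1=0}$ commutes past the remaining factors and the Euler classes $e_T(\nu_k)$, $k\ge1$, restrict to the corresponding Euler classes on $F_{r-1}^i$; iterating yields $C_P=R_0\circ\cdots\circ R_{r-1}$, and summing over all $P\in D$ finishes the induction.

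I expect the genuine difficulty to be this last bookkeeping step: one must verify that taking successive residues $\mathrm{res}_{\theta_1=0},\mathrm{res}_{\theta_2=0},\dots$ on the restriction of $a$ to the \emph{maximal} fixed set $F$ reproduces the recursively defined operation, in which each residue acts after restriction to an \emph{intermediate} fixed submanifold and division by a partial Euler class. Pinning down the order of the residues, matching each $\theta_j$ with the correct summand $\nu_j$ of the normal bundle, and checking that no spurious poles are introduced at the successive walls — which is exactly what transversality conditions (1)--(2) in the definition of a dendrite guarantee, so that every residue is taken at a genuine pole — is the crux; the rest is a formal consequence of \eqref{integral-wall-crossing-eq} and the inductive hypothesis.
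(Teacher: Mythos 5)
Your proposal is correct in outline and follows essentially the same route as the paper, which gives no formal proof of this statement but quotes it from Guillemin--Kalkman and Jeffrey et al.\ and derives it informally by exactly the recursion you describe: one application of the wall-crossing formula \eqref{integral-wall-crossing-eq} per circle factor, organized by the dendrite, iterated as in Eq.~\eqref{symplectic-subreduction-eq1} until the recursion terminates at $F = M^{T^r}$, with the path contributions assembled as in Eqs.~\eqref{residue-eq1}--\eqref{iterated-residues-general-eq}. The bookkeeping step you flag as the crux --- identifying the recursively produced composition of restrictions and residues with $R_0\circ\cdots\circ R_{r-1}$ applied to $\imath^*_F a$ via the splitting $e_T(\nu)=\prod_k e_T(\nu_k)$ and functoriality of restriction --- is likewise left implicit in the paper and deferred to the cited sources, so your reconstruction is as complete as the paper's own treatment.
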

In the next sub-section, we will use the recursive procedure and the iterated residues in the Theorem \ref{iterated-residues-thm} in our case.

\subsection{Iterated Residues for Abelian Quotients} \label{iterated-residue-alg}
In this section, we will use the methods of iterated residues of the localization introduced in the previous sub-section \ref{localization-sub-sec} to the torus action of the section \ref{introduction}. Recall that the torus $T^r$ action on the manifold $M=\mathbb{P}_n$, with $2^r=n+1$, is given by the homomorphism $\varphi: T^r \rightarrow U(n+1)$ of Eq. \eqref{action} as follows
\begin{equation} \label{action-eq}
(t_1,t_2, \cdots , t_r) \mapsto \left( \prod_{i=1}^{r}t_i^{a_{i,1}}, \prod_{i=1}^{r}t_i^{a_{i,2}}, \cdots, \prod_{i=1}^{r}t_i^{a_{i,n+1}}\right),
\end{equation}
for some $a_{i,j} \in \left\{ \pm 1 \right\}$, where $\sum_{k=0}^{r}{\binom{r}{k}}=2^r=n+1$. For fixed points $p_j \in F \equiv M^{T^r}$, the \textit{isotropy weights} $w_{p_j}$ of the infinitesimal $T^r$ action on $T_pM$ are obtained by 
\begin{equation} \label{isotropy-weights-0-eq}
w_{p_j} \equiv w_{l,j} = \sum_{k=1}^{r}{(a_{k,l} - a_{k,j}) \, \theta_k} = \sum_{k=1}^{r}{\beta_k \, \theta_k}, \quad l \neq j,
\end{equation}
where $\left\{ \theta_k \right\}_{k=1}^{r}$ form a basis for the dual of the Lie algebra $\mathfrak{t}^*_r$. We can always fix an element $\gamma \in \mathfrak{t}_r$, which is not orthogonal to all the isotropy weights $\left\{ w_{l,j} \right\}$, for all the fixed points $p_j$ with $j=1, \cdots , n+1$, namely $\langle w_{l,j}, \gamma \rangle \equiv w_{l,j}(\gamma) < 0$ for $l=1, \cdots \hat{j}, \cdots , d$, and $w_{l,j}(\gamma) > 0$ for $l=d+1, \cdots \hat{j}, \cdots , r$, where $\hat{j}$ denotes $l \neq j$. Let $\epsilon_1 = \cdots = \epsilon_d=-1$ and $\epsilon_{d+1} = \cdots = \epsilon_r=1$ and set $\tilde{w}_{l,j} = \epsilon_l \, w_{l,j}$, for $l = 1, \cdots , \hat{j}, \cdots , n+1$. This procedure is called \textit{polarization} and the resulting weights $\tilde{w}_{l,j}$ are known as the \textit{polarized weights} \cite{ginzburg2002,guillemin2003,christandl2014}.

Moreover, at every fixed point $p_j \in F$, we have $T_{p_j}\nu_0 \cong T_{p_j}M$, which splits into a direct sum of $T$-equivariant line bundles, i.e. $T_{p_j}\nu_0 \cong \oplus_{l=1}^{n} L_l^{(0)}$, for each of which the polarized version of the $T$-equivariant first Chern class $c^T_1(L_l^{(0)})$ is given by the polarized isotropy weights $\tilde{w}_{l,j} \equiv \tilde{w}_{l,j}^{(0)}$, such that $l \neq j$. We also have to note that there are exactly $n$ (polarized) isotropy weights $\tilde{w}_{l,j}^{(0)}$ at each fixed point $p_j$ and not any random $(r-1)$-tuple of them satisfy the \textit{genericity} condition of \cite{guillemin1996}, i.e. not any random $(r-1)$-tuple of isotropy weights at every fixed point are linearly independent. Hence, the $T$-equivariant Euler class $e_T(\nu_0)$ is given by
\begin{equation} \label{equivariant-euler-class-normal-0-eq}
e_{T^r}(\nu_0) = \prod_{\substack{l \neq j}}^{2^r-1}{\tilde{w}_{l,j}^{(0)}} = \prod_{\substack{l \neq j}}^{2^r-1}{ \left( \epsilon_l \sum_{k=1}^{r}{(a_{k,l} - a_{k,j}) \theta_k} \right) }.
\end{equation}
The polarized isotropy weights $\tilde{w}_{l,j}^{(s)}$ for the infinitesimal action of $T^r$ on the normal bundle $\nu_s$ is given by
\begin{equation} \label{isotropy-weights-s-eq}
\tilde{w}_{l,j}^{(s)}  = \sum_{k=s+1}^{r}{\tilde{\beta}_k^{(s)} \theta_k} = \sum_{k=s+1}^{r}{\epsilon_l(a_{k,l} - a_{k,j}) \, \theta_k} \in \mathfrak{t}^*_{r-s}, \quad l \neq j,
\end{equation}
for $s=1, \cdots , r-1$, since at any point $p \in M$, we have the decomposition $T_pM \cong T_pF_s \oplus T_p\nu_s$ and the moment images $\mu_T(M) \in \mathfrak{t}^*_r$ and $\mu_T(F_s) \in \mathfrak{t}^*_s$ and so $\mu_T(\nu_s) \in \mathfrak{t}^*_{r-s}$. Recall that every sub-manifold $F_s$ is a Hamiltonian manifold under the action of the sub-torus $T^s$, given by the following homomorphism
\begin{equation}
\varphi^{\prime}: T^s \rightarrow U(n^{\prime}+1): \left( t_1, \cdots, t_s \right) \mapsto \left( \prod_{i^{\prime}=1}^{s} t^{a^{\prime}_{i^{\prime},1}}, \cdots , \prod_{i^{\prime}=1}^{s} t^{a^{\prime}_{i^{\prime},n^{\prime}+1}} \right),
\end{equation}
with $a^{\prime}_{i,j} \in \left\{ \pm 1\right\}$, where $n^{\prime} +1= 2^s$. Hence, by using a simple dimensional analysis we find that $\text{dim}(T_p(\nu_s)) = \text{dim}(T_pM) - \text{dim}(T_p(F_s)) = 2^r-2^s$. Therefore, the equivariant Euler class $e_{T^r}(\nu_s)$ to the normal bundle $\nu_s$ of the submanifold $F_s$ can be obtained as follows
\begin{equation} \label{equivariant-euler-class-normal-s-eq}
e_{T^r}(\nu_s) = \prod_{\substack{l \neq j}}^{2^r-2^s}{\tilde{w}_{l,j}^{(s)}} = \prod_{\substack{l \neq j}}^{2^r-2^s}{ \left( \sum_{k=1}^{r}{\epsilon_l(a_{k,l} - a_{k,j}) \theta_k} \right) },
\end{equation}
with a consistent choice of orientations at all the stages of this iteration, namely for $s=0, \cdots, r-1$.
 
\subsection{Examples} 
\subsubsection{Two Qubits} \label{two-qubits-example-integral}
Let's consider again the case $r=2$, in which the torus $T^2=S^1 \times S^1$ acts in a Hamiltonian fashion on the K\"ahler manifold $\mathbb{P}_3$ through the homomorphism \eqref{action}. Recall that the matrix $a_{i,j}\equiv A$ is given by
\begin{equation} \label{a-matrix-r=2II}
A \equiv a_{i,j} = \left( \begin{matrix}
1 & 1 & -1 & -1 \\
1 & -1 & 1 & -1
\end{matrix}\right)_{2 \times 4}.
\end{equation}
The moment polytope $\Delta$ is a square (2-cube) with 4 vertices as the image under the moment map $\mu_T : \mathbb{P}_3 \rightarrow \mathfrak{t}^*$ and the symplectic quotient $M_{\xi} = \mu_T^{-1}(\xi)/T^2$, with $\xi \in \Delta_{\text{reg}}$ as shown in Figure. \ref{2-cube}, possesses at most orbifold singularities.
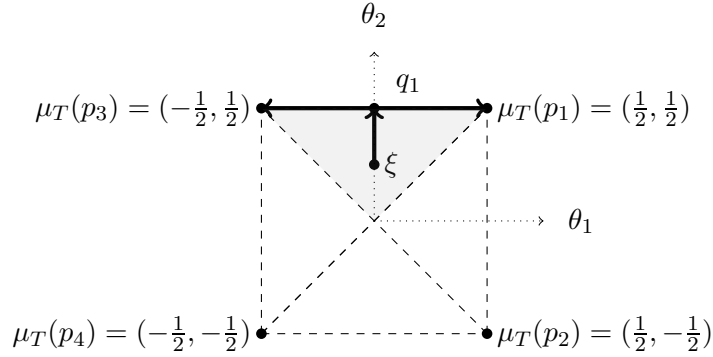
\begin{figure}[h]
\centering
\begin{tikzpicture}[
  vector/.style={ultra thick,black,>=stealth,->},
  atom/.style={blue}, x=3cm,y=3cm
  ]
    \coordinate (p0) at (0,0);
    \fill[black] (p0) circle (2pt);
    \node[left] at (p0) {$\mu_T(p_4) = (-\frac{1}{2},-\frac{1}{2})$};
    \coordinate (p1) at (1,0);
    \fill[black] (p1) circle (2pt);
    \node[right] at (p1) {$\mu_T(p_2) = (\frac{1}{2},-\frac{1}{2})$};
    \coordinate (p2) at (0,1);
    \fill[black] (p2) circle (2pt);
    \node[left] at (p2) {$\mu_T(p_3) = (-\frac{1}{2},\frac{1}{2})$};
    \coordinate (p3) at (1,1);
    \fill[black] (p3) circle (2pt);
    \node[right] at (p3) {$\mu_T(p_1) = (\frac{1}{2},\frac{1}{2})$};
    
    \coordinate (a1) at (.5,.5);
    \coordinate (a6) at (.5,1.25);
    \node[above=.5em of a6] {$\theta_2$};
    \coordinate (a5) at (1.25,.5);
    \node[right=.5em of a5] {$\theta_1$};
    
    \coordinate (a2) at (0.5,1);
    \node[above=.5em of a2] (dummy) {};
    \node[right=.05em of dummy] {$q_1$};
    \fill[black] (a2) circle (2pt);    
    \draw[ultra thick,->] (a2) -- (p3);
    \draw[ultra thick,->] (a2) -- (p2);
    
     \coordinate (a4) at (0.5,0.75);
     \node[right] at (a4) {$\xi$};
     \fill[black] (a4) circle (2pt);
     \draw[ultra thick,->] (a4) -- (a2);
        
\fill[gray,opacity=0.1] (p2) -- (p3) -- (a1);
    \draw[dotted,->] (a1) -- (a5);
    \draw[dotted,->] (a1) -- (a6);
    \draw[dashed] (p0) -- (p1);
    \draw[dashed] (p0) -- (p2);
    \draw[dashed] (p0) -- (p3);
    \draw[dashed] (p0) -- (p3);
    \draw[dashed] (p1) -- (p2);
    \draw[dashed] (p3) -- (p2);
    \draw[dashed] (p1) -- (p3);
    \draw[dashed] (p2) -- (p3);
\end{tikzpicture}
\caption{$\mu_T(\mathbb{P}_3)\equiv \Delta_2$, for $M = \mathbb{P}_3$ with $T^2$ Hamiltonian action. The gray area represent the upper $\Delta_{\text{reg}}$, to which $\xi$ belongs and the black arrows illustrates the corresponding dendrite.}
\label{2-cube} 
\end{figure}
Recall that $\text{dim}(M_{\xi})=2$ and the goal is to find $\int_{M_{\xi}}{\kappa(\tilde{\omega})}$, where $\tilde{\omega} \in H^2_{T^2}(\mathbb{P}_3;\mathbb{C})$, such that $\kappa(\tilde{\omega}) = \omega_{\xi}$ is the reduced symplectic form on $M_{\xi}$, i.e. $\omega_{\xi} \in H^2(M_{\xi})$. Recall again that according to the Duistermaat-Heckman theorem \cite{dh1982}, if $\eta,\xi \in \Delta$ belong to the same chamber of regular values $\Delta_{\text{reg}}$, then we will have
\begin{equation} \label{duistermaat-heckman-theorem-eq}
\omega_{\eta+\xi} = \omega_{\xi} - \langle \eta , c \rangle,
\end{equation}
where $c \in H^2(M_{\xi})$, or it is the Euler class of the $T^r$-fibration $\mu_T^{-1}(\xi) \rightarrow M_{\xi}$. 
\begin{lem} \label{symplectic-form-lemma}
\cite{audin1991,cho2014} If $\tilde{\omega}$ represents a cohomology class in $H^2_{T^r}(M,\mathbb{C})$, for any fixed component $F \in M^T$, we have $\tilde{\omega}|_{F}= \omega|_{F} + \langle \mu_T(F), \vec{\theta} \rangle$, where $\omega \in H^2(M,\mathbb{C})$ is the symplectic form of the original manifold $M$ and $\vec{\theta} = (\theta_1,\theta_2, \cdots, \theta_r)$ are the generators of $H^*(BT^r;\mathbb{C}) \cong \mathbb{C}[\theta_1,\theta_2, \cdots, \theta_r]$, or $\theta_k \in H^2(BS^1;\mathbb{C})$, for $k=1,\cdots,r$. In particular, if $F$ is an isolated point, then $\tilde{\omega}|_{F}= \langle \mu_T(F), \vec{\theta} \rangle$.
\end{lem}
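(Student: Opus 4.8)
The plan is to work in Cartan's model for $T^r$-equivariant de Rham cohomology and to reduce the statement to a single elementary fact: the vector fields generated by the infinitesimal $T^r$-action vanish identically along the fixed locus. First I would observe that the restriction map $\iota_F^* : H^2_{T^r}(M,\mathbb{C}) \to H^2_{T^r}(F,\mathbb{C})$ associated with the inclusion $\iota_F : F \hookrightarrow M$ is well-defined on cohomology, so it suffices to evaluate it on a convenient representative of $\tilde\omega$. For this I would take the distinguished equivariantly closed extension of $\omega$ — the equivariant symplectic form — namely $\tilde\omega = \omega + \langle \mu_T, \vec\theta\rangle \in \Omega^2_{T^r}(M)$, with $\mu_T$ the moment map of Eq. \eqref{moment-map}; here the equation $D\tilde\omega = 0$ unwinds, via $(D\eta)(X) = d(\eta(X)) - \iota_{X^{\#}}(\eta(X))$, to $d\omega = 0$ together with the moment map identity $d\langle\mu_T,X\rangle = \iota_{X^{\#}}\omega$ for all $X \in \mathfrak{t}_r$.

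Next I would pull back along $\iota_F$. Since $\iota_F^*$ is an algebra map that is the identity on the $H^*(BT^r) \cong \mathbb{C}[\theta_1,\dots,\theta_r]$ factor, it yields $\tilde\omega|_F = \omega|_F + \langle \mu_T|_F, \vec\theta\rangle$, so the only remaining point is to identify $\mu_T|_F$ with the constant $\mu_T(F)$. This is exactly where the fixed-point hypothesis enters: for $p \in F$ and $X \in \mathfrak{t}_r$ one has $X^{\#}_p = 0$, hence $d\langle\mu_T,X\rangle_p = \iota_{X^{\#}_p}\omega_p = 0$ for every $X$, so $\mu_T$ is locally constant on $F$ and, $F$ being connected, equal to the single value $\mu_T(F) \in \mathfrak{t}^*_r$. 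This gives $\tilde\omega|_F = \omega|_F + \langle \mu_T(F),\vec\theta\rangle$. For the ``in particular'' clause I would simply note that when $F$ is an isolated point $\Omega^2(F) = 0$, so $\omega|_F = 0$ and only the linear term $\langle \mu_T(F),\vec\theta\rangle$ survives; this linear form in $\theta_1,\dots,\theta_r$ is precisely the one attached to the corresponding vertex of the hypercube $\Delta$ of Eq. \eqref{hyper-r-cube-eq}.

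I do not expect a genuine obstacle here: the lemma is essentially the restriction of the defining property of the moment map to the fixed locus, and it is classical (see \cite{audin1991,cho2014}). The one point that deserves care, which I would address explicitly, is the wording — the displayed formula holds for the canonical equivariant extension of $\omega$ (a representative whose degree-two part is $\omega$ and whose degree-zero part is the normalized moment map of Eq. \eqref{moment-map}), not for an arbitrary class in $H^2_{T^r}(M,\mathbb{C})$; since $H^1(\mathbb{P}_n)=0$ such an extension is unique up to the overall constant already fixed by Eq. \eqref{moment-map}, so this causes no real trouble. I would also record, for later use, that in the decomposition $H^2_{T^r}(F) \cong H^2(F)\oplus\mathfrak{t}^*_r$ of a connected fixed component the term $\omega|_F$ is the $H^2(F)$-component and $\langle \mu_T(F),\vec\theta\rangle$ the $\mathfrak{t}^*_r$-component, which is the form in which the lemma will be used in the iterated-residue computations.
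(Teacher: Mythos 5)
Your proof is correct. Note that the paper does not actually prove Lemma \ref{symplectic-form-lemma}: it imports it from \cite{audin1991,cho2014} with no argument, so there is nothing to compare step by step. Your Cartan-model derivation is the standard one and is fully consistent with the machinery the paper itself sets up: the equivariant differential $(D\eta)(X) = d(\eta(X)) - \iota_{X^{\#}}(\eta(X))$ and the canonical equivariantly closed extension $\tilde{\omega} = \omega + \mu_X$ are both introduced in Section \ref{introduction}, and your three steps --- closedness of $\tilde{\omega}$ unwinding to the moment map identity, vanishing of $X^{\#}$ along $F$ forcing $\mu_T$ to be constant on each connected fixed component, and $\Omega^2(\mathrm{pt}) = 0$ for the isolated-point case --- are exactly what is needed. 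Your closing caveat is also well taken: as literally worded the lemma cannot hold for an arbitrary class in $H^2_{T^r}(M,\mathbb{C})$ (any class of the form $\langle c, \vec{\theta}\rangle$ with constant $c$ restricts to $\langle c,\vec{\theta}\rangle$ on every fixed point, not to a vertex of $\Delta$), so the statement must be read as applying to the distinguished equivariant extension of $\omega$ whose degree-zero part is the moment map of Eq.\ \eqref{moment-map}; since $H^1(\mathbb{P}_n) = 0$ this pins the class down up to the additive constant already fixed there, which is precisely the normalization the paper uses in the residue computations of Section \ref{recursive-wall-crossing-integrals}.
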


Using the localization or the iterated residue algorithm described in the previous section \ref{localization-sub-sec}, we have
\[
\int_{M_{\xi}}{\kappa(\tilde{\omega})} = \sum_{P}{\int_{F_P}{C_P(\imath^*_{F_P}(\tilde{\omega}))}} = \sum_{P}{C_P(\tilde{\omega}|_{F_P})},
\]
since the fixed point set $M^T \equiv F$ contains isolated points $p_j$, for $j=1, \cdots, n+1$, in which the contribution $C_P$ for the path $P$ leading to $\mu_T(p_j)$ as a vertex of the moment polytope $\Delta$ is given by
\begin{equation} \label{path-cont-2-qubit-eq}
C_j(\tilde{\omega}|_F) = R_{0} \circ R_{1}(\tilde{\omega}|_{p_j}),
\end{equation}
where
\begin{equation} \label{path-residues-2-qubit-eq}
R_s(\tilde{\omega}|_{p_j}) = \text{res}_{\theta_{s+1}=0} \left( \frac{ \tilde{\omega}|_{p_j}}{e_T(\nu_s)} \right), \quad s=0,1,
\end{equation}
with $e_T(\nu_s)$, for $s=0,1$, given in Eq. \eqref{equivariant-euler-class-normal-s-eq}. Let's first suppose $\xi \in \Delta_{\text{reg}}$ belongs to the upper chamber (gray chamber in Figure. \ref{2-cube}). The dendrite starts from $\xi$ and intersects a co-dimension one wall at $q$ and in the last step reaches the fixed points $p_1$ and $p_3$. From the Eq. \eqref{a-matrix-r=2II}, the isotropy weights representation at $p_1$ is obtained as
\begin{equation} \label{2-qubit-p1-isotropy-weights-eq}
w_{2,1}^{(0)} = -2 \theta_2, \quad w_{3,1}^{(0)} = -2 \theta_1, \quad w_{4,1}^{(0)} = -2 \theta_1 - 2 \theta_2,
\end{equation}
whereas the non-zero isotropy representation at the intersection of the dendrite and the co-dimension one wall of $\Delta$, namely $q$, is given by
\begin{equation} \label{2-qubit-p1-isotropy-weights-eq2}
w_{2,1}^{(s=1)}= -2 \theta_2, \quad  w_{4,1}^{(s=1)} = -2 \theta_2.
\end{equation}
Now, let's consider a generic element $\gamma = (-2, -1)^T \in \mathfrak{t}_2$, with respect to whose $w_{l,j}^{(0)} ( \gamma ) \equiv \langle \gamma, w_{l,j}^{(0)} \rangle \neq 0$, for $l=1, \cdots \hat{j}, \cdots , r$ and since there are no negative isotropy weights $w_{l,j}^{(0)}$ with respect to the generic element $\gamma$ no polarization is required and from Eq. \eqref{2-qubit-p1-isotropy-weights-eq} and \eqref{2-qubit-p1-isotropy-weights-eq2} we have
\begin{equation} \label{2-qubit-euler-class-eq1}
e_T(\nu_0) = \prod_{l \neq 1}{\tilde{w}_{l,j}^{(0)}} = \prod_{l \neq 1}{w_{l,j}^{(0)}} = -8\theta_1^2 \theta_2 - 8 \theta_1 \theta_2^2,
\end{equation}
\begin{equation} \label{2-qubit-euler-class-eq2}
e_T(\nu_1) = \prod_{l \neq 1}^{2-1}{\tilde{w}_{l,j}^{(1)}} = \prod_{l \neq 1}{w_{l,j}^{(1)}} = 4 \theta_2^2.
\end{equation}

From the lemma \ref{symplectic-form-lemma}, we have
\begin{equation} \label{equivariant-symplectic-form-2-qubit-eq}
\tilde{\omega}|_{p_1} = \langle \mu_T(p_1) - \xi, \vec{\theta} \rangle = (1-\xi_1) \theta_1 + (1-\xi_2) \theta_2,
\end{equation}
where $\xi \equiv (\xi_1, \xi_2) \in \text{upper}(\Delta_{\text{reg}})$, and $\vec{\theta} = (\theta_1, \theta_2)^T \in \mathfrak{t}_2$. By replacing from the Eqs. \eqref{2-qubit-euler-class-eq1}, \eqref{2-qubit-euler-class-eq2} and \eqref{equivariant-symplectic-form-2-qubit-eq} in the iterated residue operations of Eq. \eqref{path-residues-2-qubit-eq} and then Eq. \eqref{path-cont-2-qubit-eq}, the contribution $C_1(\tilde{\omega}|_{p_1})$ of the dendrite's path starting from $\xi$ and ending at the $\mu_T(p_1)$ will be as
\begin{eqnarray} \label{path-contribution-2-qubit-p1-eq}
C_1(\tilde{\omega}|_{p_1}) & = & (\text{res})_{\theta_1=0} \left( \frac{1}{e_T(\nu_0)} (\text{res})_{\theta_2=0} \left( \frac{\tilde{\omega}|_{p_1}}{e_T(\nu_1)} \right) \right) \\ \nonumber
& = & (\text{res})_{\theta_1=0} \left( \frac{1}{-8\theta_1^2 \theta_2 - 8 \theta_1 \theta_2^2} (\text{res})_{\theta_2=0} \left( \frac{(1-\xi_1) \theta_1 + (1-\xi_2) \theta_2}{4 \theta_2^2} \right) \right) \\ \nonumber
&\propto & - (1 - \xi_2).
\end{eqnarray}
Repeating the same procedure, the contribution $C_3(\tilde{\omega}|_{p_3})$ of the dendrite's path starting from $\xi$ and ending at the $\mu_T(p_3)$, as in Figure. \ref{2-cube}, will be as
\begin{eqnarray} \label{path-contribution-2-qubit-p3-eq}
C_3(\tilde{\omega}|_{p_3}) & = & (\text{res})_{\theta_1=0} \left( \frac{1}{e_T(\nu_0)} (\text{res})_{\theta_2=0} \left( \frac{\tilde{\omega}|_{p_3}}{e_T(\nu_1)} \right) \right) \\ \nonumber
& = & (-1)(\text{res})_{\theta_1=0} \left( \frac{1}{-8\theta_1^2 \theta_2 + 8 \theta_1 \theta_2^2} (\text{res})_{\theta_2=0} \left( \frac{(-1-\xi_1) \theta_1 + (1-\xi_2) \theta_2}{4 \theta_2^2} \right) \right) \\ \nonumber
&\propto & - (1 - \xi_2).
\end{eqnarray}
where $\tilde{\omega}|_{p_3} = \langle \mu_T(p_3) - \xi, \vec{\theta} \rangle = (-1-\xi_1) \theta_1 + (1-\xi_2) 
\theta_2$ and, 
\[
e_T(\nu_0) = \prod_{l \neq 3}{\tilde{w}_{l,3}^{(0)}} = - 8\theta_1^2 \theta_2 + 8 \theta_1 \theta_2^2,
\] 
\[
e_T(\nu_1) = \prod_{l \neq 3}{\tilde{w}_{l,3}^{(1)}} = 4 \theta_2^2,
\]
since the set of isotropy weights representations at the point $\mu_T(p_3)$ contains two negative isotropy weights, namely $w_{1,3}^{(0)} = 2 \theta_1$ and $w_{2,3}^{(0)}  = 2 \theta_1 - 2 \theta_2$, whose polarization should be taken into account in finding the equivariant Euler classes $e_T(\nu_0)$ and $e_T(\nu_1)$ during the iterated residue algorithm. The multiplication by $(-1)$ after the second equality in Eq. \eqref{path-contribution-2-qubit-p3-eq} is due to the fact that in the second step of the path leading to $\mu_T(p_3)$ we have to move in the direction of $-\theta_1$. Therefore,
\begin{equation} \label{2-qubit-upper-chamber-integral-eq}
\int_{M_{\xi}}{\kappa(\tilde{\omega})} = C_1(\tilde{\omega}|_{p_1}) + C_3(\tilde{\omega}|_{p_3}) \propto - (1 - \xi_2),
\end{equation}
which holds for every $\xi \in \text{upper}(\Delta_{\text{reg}})$. Obviously, for a point $\xi \in \text{right}(\Delta_{\text{reg}})$, we will have  
\begin{equation} \label{2-qubit-right-chamber-integral-eq}
\int_{M_{\xi}}{\kappa(\tilde{\omega})} = C_1(\tilde{\omega}|_{p_1}) + C_2(\tilde{\omega}|_{p_2}),
\end{equation}
where $\mu_T(p_2)$ is the lower-right vertex of the moment polytope $\Delta$ at the coordinate $(\frac{1}{2},-\frac{1}{2})$. The contributions of the paths $C_1(\tilde{\omega}|_{p_1})$ and $C_2(\tilde{\omega}|_{p_2})$ can be computed similarly by following the iterated residue algorithms described previously. However, for points in the right (respectively left) chamber $\Delta_{\text{reg}}$, the order of the residue operations will be switched, since in the new dendrite the first step, which is intersecting a codimension one wall, would be in reverse order and the same argument hold for the other steps. More precisely,
\begin{eqnarray} \label{path-contribution-2-qubit-p1-eq2}
C_1(\tilde{\omega}|_{p_1}) & = & (\text{res})_{\theta_2=0} \left( \frac{1}{e_T(\nu_0)} (\text{res})_{\theta_1=0} \left( \frac{\tilde{\omega}|_{p_1}}{e_T(\nu_1)} \right) \right) \\ \nonumber
& = & (\text{res})_{\theta_2=0} \left( \frac{1}{-8\theta_1^2 \theta_2 - 8 \theta_1 \theta_2^2} (\text{res})_{\theta_1=0} \left( \frac{(1-\xi_1) \theta_1 + (1-\xi_2) \theta_2}{4 \theta_1^2} \right) \right) \\ \nonumber
&\propto& - (1 - \xi_1),
\end{eqnarray}
where 
\[
e_T(\nu_0) = \prod_{l \neq 1}{\tilde{w}_{l,1}^{(0)}} = -8\theta_1^2 \theta_2 - 8 \theta_1 \theta_2^2,
\] 
\[
e_T(\nu_1) = \prod_{l \neq 1}{\tilde{w}_{l,1}^{(1)}} = 4 \theta_1^2,
\]
since there are no negative isotropy weight with respect to the generic element $\gamma \in \mathfrak{t}_2$. Similarly, 
\begin{eqnarray} \label{path-contribution-2-qubit-p1-eq3}
C_2(\tilde{\omega}|_{p_2}) & = & (\text{res})_{\theta_2=0} \left( \frac{1}{e_T(\nu_0)} (\text{res})_{\theta_1=0} \left( \frac{\tilde{\omega}|_{p_2}}{e_T(\nu_1)} \right) \right) \\ \nonumber
& = & (\text{res})_{\theta_2=0} \left( \frac{1}{-8\theta_1^2 \theta_2 + 8 \theta_1 \theta_2^2} (\text{res})_{\theta_1=0} \left( \frac{(1-\xi_1) \theta_1 + (1-\xi_2) \theta_2}{4 \theta_1^2} \right) \right) \\ \nonumber
&\propto& - (1 - \xi_1),
\end{eqnarray}
where 
\[
e_T(\nu_0) = \prod_{l \neq 2}{\tilde{w}_{l,2}^{(0)}} = -8\theta_1^2 \theta_2 + 8 \theta_1 \theta_2^2,
\] 
\[
e_T(\nu_1) = \prod_{l \neq 1}{\tilde{w}_{l,2}^{(1)}} = 4 \theta_1^2,
\]
since there is only one negative isotropy weight $w_{l,2}^{(0)}$. Therefore, 
\begin{equation} \label{2-qubit-right-chamber-integral-eq2}
\int_{M_{\xi}}{\kappa(\tilde{\omega})} = C_1(\tilde{\omega}|_{p_1}) + C_2(\tilde{\omega}|_{p_2}) \propto - (1 - \xi_1).
\end{equation}

We can repeat the same procedure for $\xi \in \text{below}(\Delta_{\text{reg}})$ and $\text{left}(\Delta_{\text{reg}})$ and obtain similar results. In summary, for the case $r=2$ we have
\begin{equation} \label{2-qubits-final-eq}
\int_{M_{\xi}}{\kappa(\tilde{\omega})} = c \left( 1 - \text{max}(|\xi_1|,|\xi_2|) \right),
\end{equation}
where $c$ is a constant. In fact, the result in Eq. \eqref{2-qubits-final-eq} satisfies concavity property and coincides with the Abelian Duistermaat-Heckman measure obtained in \cite{christandl2014}, since $\text{dim}(M_{\xi})=2$ and $\kappa(\tilde{\omega}) \in H^2(M_{\xi})$.

\subsubsection{Three Qubits} \label{three-qubits-example-integral}
Now, let's consider again the case $r=3$, in which the torus $T^2=S^1 \times S^1 \times S^1$ acts in a Hamiltonian fashion on the K\"ahler manifold $\mathbb{P}_7$ through the homomorphism \eqref{action}. The matrix $a_{i,j}\equiv A$ is given in the Eq. \eqref{a-matrix-r=3} and the moment polytope $\Delta$ is a cube with $8$ vertices as the image under the moment map $\mu_T : \mathbb{P}_7 \rightarrow \mathfrak{t}^*$ of the fixed points set $M^T$ and the symplectic quotient $M_{\xi} = \mu_T^{-1}(\xi)/T^3$, with $\xi \in \text{upper}(\Delta_{\text{reg}})$ as shown in Figure. \ref{3-cube-example}, possesses at most orbifold singularities.

\begin{figure}[h]
\centering
\begin{tikzpicture}[
  vector/.style={ultra thick,black,>=stealth,->},
  atom/.style={blue}, x=4cm,y=3cm,z=1.5cm
  ]
    \coordinate (p0) at (0,0,0);
    \coordinate (p1) at (1,0,0);
    \coordinate (p2) at (0,1,0);
    \node[left] at (p2) {$\mu_T(p_3)=(\frac{1}{2},-\frac{1}{2},\frac{1}{2})$};
    \coordinate (p3) at (0,0,1);

    \coordinate (p4) at (1,1,0);
    \node[right=.5em] at (p4) {$\mu_T(p_1) =(\frac{1}{2},\frac{1}{2},\frac{1}{2})$};
    \fill[black] (p4) circle (2pt);
    \coordinate (p5) at (1,0,1);
    \coordinate (p6) at (0,1,1);
    \coordinate (p7) at (1,1,1);
    \node[right] at (p7) {$\mu_T(p_4)=(-\frac{1}{2},\frac{1}{2},\frac{1}{2})$};
    \fill[black] (p7) circle (2pt);

    \coordinate (a2) at (1,1,0.5);
    \node[right=.35em] at (a2) {$q_2$};
    \fill[black] (a2) circle (2pt);    
    \draw[ultra thick,->] (a2) -- (p7);
    \draw[ultra thick,->] (a2) -- (p4);
    
    \coordinate (a3) at (0.65,1,0.5);
    \node[above] at (a3) {$q_1$};
    \fill[black] (a3) circle (2pt);
    \draw[ultra thick,->] (a3) -- (a2);

    \coordinate (a4) at (0.65,0.7,0.5);
    \node[below] at (a4) {$\xi$};
    \fill[black] (a4) circle (2pt);
    \draw[ultra thick,->] (a4) -- (a3);

    \coordinate (a5) at (0.5,0.5,0.5);
    \coordinate (a6) at (1.5,0.5,0.5);
    \node[right] at (a6) {$\theta_2$};
    \coordinate (a7) at (0.5,1.5,0.5);
    \node[above] at (a7) {$\theta_3$};
    \coordinate (a8) at (0.5,0.5,-1.25);
    \node[left] at (a8) {$\theta_1$};
    \draw[dotted,->] (a5) -- (a6);
    \draw[dotted,->] (a5) -- (a7);
    \draw[dotted,->] (a5) -- (a8);
    
    \coordinate (a9) at (0.5,1,0);
    \node[above=.35em] at (a9) {$q^{\prime}_2$};
    \fill[black] (a9) circle (2pt);    
    \draw[ultra thick,->] (a9) -- (p2);
    \draw[ultra thick,->] (a9) -- (p4);
    
    \coordinate (a10) at (0.5,0.75,0);
    \node[below] at (a10) {$q^{\prime}_1$};
    \fill[black] (a10) circle (2pt);
    \draw[ultra thick,->] (a10) -- (a9);
    
    \coordinate (a11) at (0.5,0.75,0.25);
    \node[right=.15em] at (a11) {$\xi^{\prime}$};
    \fill[black] (a11) circle (2pt);
    \draw[ultra thick,->] (a11) -- (a10);

\fill[gray,opacity=0.2] (p7) -- (p6) -- (p2) -- (p4);

    \draw[dashed] (p0) -- (p1);
    \draw[dashed] (p0) -- (p2);
    \draw[dashed] (p0) -- (p3);
    \draw[dashed] (p0) -- (p3);
    \draw[dashed] (p3) -- (p5);
    \draw[dashed] (p3) -- (p6);
    \draw[dashed] (p6) -- (p2);
    \draw[dashed] (p5) -- (p1);
    \draw[dashed] (p5) -- (p7);
    \draw[dashed] (p1) -- (p4);
    \draw[dashed] (p6) -- (p7);
    \draw[dashed] (p2) -- (p4);

\end{tikzpicture}
\caption{$\mu_T(\mathbb{P}_7)\equiv \Delta_3$, for $M = \mathbb{P}_7$ with $T^3$ Hamiltonian action. Here, $\xi$ belongs to the upper $(\Delta_{\text{reg}})$ and $\xi^{\prime}$ to the front$(\Delta_{\text{reg}})$. The black arrows illustrate the corresponding dendrites.}
\label{3-cube-example} 
\end{figure}
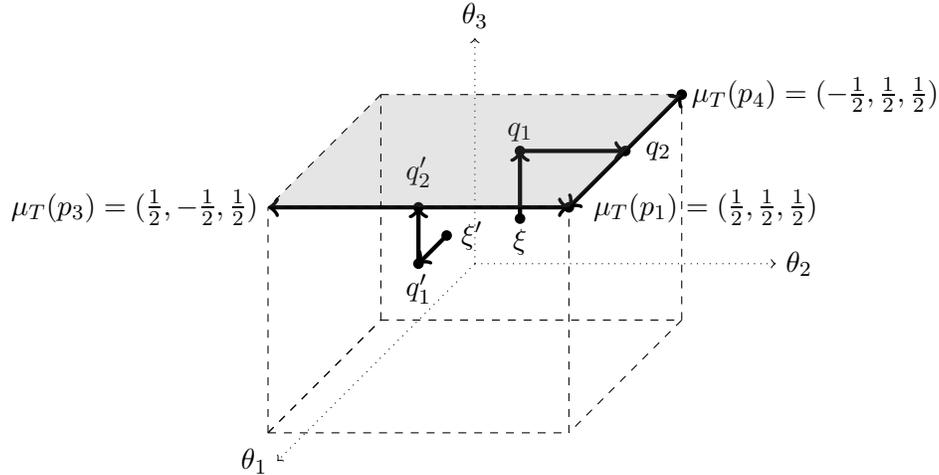
Recall that $2m= \text{dim}_{\mathbb{R}}(M_{\xi}) = 8$ and the goal is to find the pairing $\int_{M_{\xi}}{\kappa(\tilde{\eta}^a \tilde{\omega}^b)}$, where $a$ and $b$ are non-negative integers such that $a+b =m$ and $\tilde{\omega} := \omega + d \langle \mu(p), X \rangle \in H^2_{T^3}(\mathbb{P}_7;\mathbb{C})$, for every $X \in \mathfrak{t}$. The polarization direction $\gamma \in \mathfrak{t}$ is chosen to be $(-4,-2,-1)^T$. For $a=0$ the integral $\int_{M_{\xi}}{\kappa(\tilde{\omega}^m)}$, where $\tilde{\omega}^m = \tilde{\omega} \wedge \cdots \wedge \tilde{\omega}$ and $\kappa(\tilde{\omega}^m) \cong \omega_{\xi}^m \in H^{2m}(M_{\xi})$, can be computed using the iterated residue algorithm described in the section \ref{iterated-residue-alg} as follows:
\[
\int_{M_{\xi}}{\kappa(\tilde{\omega}^m)} = \sum_{P}{\int_{F_P}{C_P(\imath^*_{F_P}(\tilde{\omega}^m))}} = \sum_{P}{C_P(\tilde{\omega}^m|_{F_P})},
\]
where 
\begin{equation} \label{path-cont-3-qubit-eq}
C_j(\tilde{\omega}^m|_F) = R_{0} \circ R_{1} \circ R_{2}(\tilde{\omega}^m|_{p_j}),
\end{equation}
\begin{equation} \label{path-residues-3-qubit-eq}
R_s(\tilde{\omega}^m|_{p_j}) = \text{res}_{\theta_{s+1}=0} \left( \frac{ \tilde{\omega}^m|_{p_j}}{e_T(\nu_s)} \right), \quad s=0,1,2,
\end{equation}
with $e_T(\nu_s)$ given in Eq. \eqref{equivariant-euler-class-normal-s-eq}. Using the lemma \ref{symplectic-form-lemma} we have $\tilde{\omega}^m|_F = \langle \mu_T(F) - \xi, \theta \rangle^m \in H^{2m}_T(M;\mathbb{C})$, since the fixed point set $M^T$ contains $n+1$ isolated fixed points. In our case and for $r=3$ we have $m=4$ and for $\xi \equiv (\xi_1,\xi_2,\xi_3) \in \text{upper}(\Delta_{\text{reg}})$ in Figure \ref{3-cube-example} the dendrite starts from $\xi$ and in the first step intersects co-dimension one wall at $q_1$ and in the second step intersects the co-dimension two wall at $q_2$ before reaching the fixed points $\mu_T(p_1)$ and $\mu_T(p_4)$ in the last step. Therefore,
\begin{equation} \label{3-qubits-upper-reg-int}
\int_{M_{\xi}}{\kappa(\tilde{\omega}^4)} = C_1(\tilde{\omega}^4|_{p_1}) + C_4(\tilde{\omega}^4|_{p_4}),
\end{equation}
where
\[
C_1(\tilde{\omega}^4|_{p_1}) = (\text{res})_{\theta_1=0} \left( \frac{1}{e_T(\nu_0)} (\text{res})_{\theta_2=0} \left( \frac{1}{e_T(\nu_1)} (\text{res})_{\theta_3=0} \left( \frac{\tilde{\omega}^4|_{p_1}}{e_T(\nu_2)} \right) \right) \right),
\]
\[
C_4(\tilde{\omega}^4|_{p_4}) = (-1)(\text{res})_{\theta_1=0} \left( \frac{1}{e_T(\nu_0)} (\text{res})_{\theta_2=0} \left( \frac{1}{e_T(\nu_1)} (\text{res})_{\theta_3=0} \left( \frac{\tilde{\omega}^4|_{p_4}}{e_T(\nu_2)} \right) \right) \right),
\]
in which the multiplication by $(-1)$ in the second equation is due to the fact that in the last step of the path leading to $\mu_T(p_4)$ we have to move in the direction of $-\theta_1$ and
\begin{eqnarray} \label{3-qubits-restrictions-1}
\tilde{\omega}^4|_{p_1} & = & \langle \mu_T(p_1) - \xi, \vec{\theta} \rangle^4 = \left( (1-\xi_1)\theta_1 + (1-\xi_2) \theta_2 + (1-\xi_3) \theta_3 \right)^4, \\ \nonumber
\tilde{\omega}^4|_{p_4} & = &\langle \mu_T(p_4) - \xi, \vec{\theta} \rangle^4 = \left( (-1-\xi_1)\theta_1 + (1-\xi_2) \theta_2 + (1-\xi_3) \theta_3 \right)^4.
\end{eqnarray}
To find $C_1(\tilde{\omega}^4|_{p_1})$ we have to note that there are no negative isotropy weights and so no polarization of isotropy weights is required. Therefore,
\[
e_T(\nu_s) = \prod_{l \neq 1}^{2^3 - 2^s}{w_{l,1}^{(s)}}, \quad s=0,1,2,
\]
and hence the contributions from the path $C_1$ leading to $\mu_T(p_1)$ can be calculated as
\begin{equation} \label{3-qbits-upper-reg-c1-cont}
C_1(\tilde{\omega}^4|_{p_1}) \propto (1-\xi_2)(-1+\xi_3)^3.
\end{equation}
For $C_4(\tilde{\omega}^4|_{p_4})$, we have to note that there are four negative isotropy weights, namely $w_{l,4}^{(0)}$, for $l=1,2,3,5$, whose polarization have to be taken into account to calculate $C_4(\tilde{\omega}^4|_{p_4})$ as follows
\begin{equation} \label{3-qbits-upper-reg-c4-cont}
C_4(\tilde{\omega}^4|_{p_4}) \propto (1-\xi_2)(-1+\xi_3)^3.
\end{equation}
Hence, from Eq. \eqref{3-qubits-upper-reg-int}, we will have 
\begin{equation} \label{3-qbits-upper-reg-c1+c4-cont}
\int_{M_{\xi}}{\kappa(\tilde{\omega}^4)} = C_1(\tilde{\omega}^4|_{p_1}) + C_4(\tilde{\omega}^4|_{p_4}) = c (1-\xi_2)(-1+\xi_3)^3,
\end{equation}
which is in fact the Abelian Duistermaat-Heckman measure for a three-qubit system in its pure state for every $\xi \equiv (\xi_1,\xi_2,\xi_3) \in \text{upper}(\Delta_{\text{reg}})$ in Figure \ref{3-cube-example}. We can repeat the above calculations for moment values $\xi \in \Delta_{\text{reg}}$ belonging to other regular chambers of the hypercube and obtain similar results. For instance, let $\xi^{\prime} \equiv (\xi^{\prime}_1,\xi^{\prime}_2,\xi^{\prime}_3) \in \text{front}(\Delta_{\text{reg}})$, then
\begin{equation} \label{3-qubits-front-reg-int}
\int_{M_{\xi^{\prime}}}{\kappa(\tilde{\omega}^4)} = C_1(\tilde{\omega}^4|_{p_1}) + C_3(\tilde{\omega}^4|_{p_3}),
\end{equation}
where
\[
C_1(\tilde{\omega}^4|_{p_1}) = (\text{res})_{\theta_2=0} \left( \frac{1}{e_T(\nu_0)} (\text{res})_{\theta_3=0} \left( \frac{1}{e_T(\nu_1)} (\text{res})_{\theta_1=0} \left( \frac{\tilde{\omega}^4|_{p_1}}{e_T(\nu_2)} \right) \right) \right),
\]
\[
C_3(\tilde{\omega}^4|_{p_3}) = (-1)(\text{res})_{\theta_2=0} \left( \frac{1}{e_T(\nu_0)} (\text{res})_{\theta_3=0} \left( \frac{1}{e_T(\nu_1)} (\text{res})_{\theta_1=0} \left( \frac{\tilde{\omega}^4|_{p_3}}{e_T(\nu_2)} \right) \right) \right),
\]
in which the multiplication by $(-1)$ in the second equation is due to the fact that in the last step of the path leading to $\mu_T(p_3)$ we have to move in the direction of $-\theta_2$ and similar to Eq. \eqref{3-qubits-restrictions-1}
\begin{eqnarray} \label{3-qubits-restrictions-2}
\tilde{\omega}^4|_{p_1} & = & \langle \mu_T(p_1) - \xi^{\prime}, \vec{\theta} \rangle^4 = \left( (1-\xi^{\prime}_1)\theta_1 + (1-\xi^{\prime}_2) \theta_2 + (1-\xi^{\prime}_3) \theta_3 \right)^4, \\ \nonumber
\tilde{\omega}^4|_{p_3} & = &\langle \mu_T(p_3) - \xi^{\prime}, \vec{\theta}  \rangle^4 = \left( (1-\xi^{\prime}_1)\theta_1 + (-1-\xi^{\prime}_2) \theta_2 + (1-\xi^{\prime}_3) \theta_3 \right)^4.
\end{eqnarray}
Recall that
\[
e_T(\nu_s) = \prod_{l \neq 1}^{2^3 - 2^s}{w_{l,j}^{(s)}}, \quad s=0,1,2,
\]
where $j=1$ and $j=3$ in $C_1$ and $C_3$, respectively. Since there is no negative isotropy weights at $\mu_T(p_1)$, the contribution from the path $C_1$ can be calculated as
\begin{equation} \label{3-qbits-front-reg-c1-cont}
C_1(\tilde{\omega}^4|_{p_1}) \propto (1-\xi_3)(-1+\xi_1)^3.
\end{equation}
At $\mu_T(p_3)$, there are four negative isotropy weights, namely $w_{l,3}^{(0)}$ with $l=1,2,5,6$ whose polarization again have to be taken into account to find $C_3(\tilde{\omega}^4|_{p_3})$ as
\begin{equation} \label{3-qbits-front-reg-c3-cont}
C_3(\tilde{\omega}^4|_{p_3}) \propto (1-\xi_3)(-1+\xi_1)^3.
\end{equation} 
Hence, from Eq. \eqref{3-qubits-front-reg-int}, we will have 
\begin{equation} \label{3-qbits-front-reg-c1+c3-cont}
\int_{M_{\xi^{\prime}}}{\kappa(\tilde{\omega}^4)} = C_1(\tilde{\omega}^4|_{p_1}) + C_3(\tilde{\omega}^4|_{p_3}) \propto (1-\xi_3)(-1+\xi_1)^3,
\end{equation}
which is in fact the Abelian Duistermaat-Heckman measure for a three-qubit system in its pure state for every $\xi^{\prime} \equiv (\xi^{\prime}_1,\xi^{\prime}_2,\xi^{\prime}_3) \in \text{front}(\Delta_{\text{reg}})$ in Figure \ref{3-cube-example}

Now, let $a \neq 0$ and find the pairing $\int_{M_{\xi}}{\kappa(\tilde{\eta}^a \tilde{\omega}^b)}$, where $a$ and $b$ are non-negative integers of complementary degrees $a+b =m$. Again, by using the lemma \ref{symplectic-form-lemma} we have $\tilde{\omega}^b|_F = \langle \mu_T(F) - \xi, \vec{\theta} \rangle^b \in H^{2b}_T(M;\mathbb{C})$ and $\tilde{\eta}^a|_F = (-\sum_{i}{\theta_i})^a \in H^{2a}_T(M;\mathbb{C})$, where $\theta_i$s are the generators of $H^*(BS^1;\mathbb{C})$ such that the Euler class of the Hopf bundle $ET^r \rightarrow BT^r$ is $-\vec{\theta} \equiv - (\theta_1, \theta_2, \cdots, \theta_r)$ and since the fixed point set $M^T$ contains $n+1$ isolated fixed points \cite{cho2014}. Let's consider the three-qubit example and choose the same polarization direction $\gamma = (-4,-2,-1)^T \in \mathfrak{t}$ and follow the iterated residue algorithm in section \ref{iterated-residue-alg} to compute
\[
\int_{M_{\xi}}{\kappa(\tilde{\eta}^a \tilde{\omega}^b)} = \sum_{P}{\int_{F}{C_P(\imath^*_{F}(\tilde{\eta}^a \tilde{\omega}^b))}} = \sum_{P}{C_P(\tilde{\eta}^a|_F \tilde{\omega}^b|_F)},
\]
where 
\begin{equation} \label{path-cont-3-qubit-eq2}
C_j(\tilde{\eta}^a|_F \tilde{\omega}^b|_F) = R_{0} \circ R_{1} \circ R_{2}(\tilde{\eta}^a|_{p_j} \tilde{\omega}^b|_{pj}),
\end{equation}
\begin{equation} \label{path-residues-3-qubit-eq2}
R_s(\tilde{\eta}^a|_{p_j} \tilde{\omega}^b|_{p_j}) = \text{res}_{\theta_{s+1}=0} \left( \frac{ \tilde{\eta}^a|_{p_j} \tilde{\omega}^b|_{p_j}}{e_T(\nu_s)} \right), \quad s=0,1,2,
\end{equation}
with $e_T(\nu_s)$ given in Eq. \eqref{equivariant-euler-class-normal-s-eq}. For three-qubit example $a+b=m=4$ and for instance for $\xi \equiv (\xi_1,\xi_2,\xi_3) \in \text{upper}(\Delta_{\text{reg}})$, whose corresponding dendrite is shown in Figure \ref{3-cube-example}, we will have
\begin{equation} \label{3-qubits-upper-reg-int2}
\int_{M_{\xi}}{\kappa(\tilde{\eta}^a \tilde{\omega}^b)} = C_1(\tilde{\eta}^a|_{p_1} \tilde{\omega}^b|_{p_1}) + C_4(\tilde{\eta}^a|_{p_4} \tilde{\omega}^b|_{p_4}),
\end{equation}
where
\[
C_1(\tilde{\eta}^a|_{p_1} \tilde{\omega}^b|_{p_1}) = (\text{res})_{\theta_1=0} \left( \frac{1}{e_T(\nu_0)} (\text{res})_{\theta_2=0} \left( \frac{1}{e_T(\nu_1)} (\text{res})_{\theta_3=0} \left( \frac{\tilde{\eta}^a|_{p_1} \tilde{\omega}^b|_{p_1}}{e_T(\nu_2)} \right) \right) \right),
\]
\[
C_4(\tilde{\eta}^a|_{p_4} \tilde{\omega}^b|_{p_4}) = (-1)(\text{res})_{\theta_1=0} \left( \frac{1}{e_T(\nu_0)} (\text{res})_{\theta_2=0} \left( \frac{1}{e_T(\nu_1)} (\text{res})_{\theta_3=0} \left( \frac{\tilde{\eta}^a|_{p_4} \tilde{\omega}^b|_{p_4}}{e_T(\nu_2)} \right) \right) \right),
\]
with
\[
\tilde{\eta}^a|_{p_1} \tilde{\omega}^b|_{p_1} =  (-\sum_{i=1}^{3}{\theta_i})^a \langle \mu_T(p_1) - \xi, \vec{\theta} \rangle^b = (-\sum_{i=1}^{3}{\theta_i})^a \left( (1-\xi_1)\theta_1 + (1-\xi_2) \theta_2 + (1-\xi_3) \theta_3 \right)^b, 
\]
\[
\tilde{\eta}^a|_{p_4} \tilde{\omega}^b|_{p_4} =  (-\sum_{i=1}^{3}{\theta_i})^a \langle \mu_T(p_4) - \xi, \vec{\theta} \rangle^b = (-\sum_{i=1}^{3}{\theta_i})^a \left( (-1-\xi_1)\theta_1 + (1-\xi_2) \theta_2 + (1-\xi_3) \theta_3 \right)^b.
\]
Hence, from Eq. \eqref{3-qubits-upper-reg-int2} and for instance for $a=1$ and $b=3$ we have
\begin{equation} \label{3-qbits-upper-reg-c1+c4-cont-a-b}
\int_{M_{\xi}}{\kappa(\tilde{\eta} \, \tilde{\omega}^3)} = C_1(\tilde{\eta}|_{p_1} \tilde{\omega}^3|_{p_1}) + C_4(\tilde{\eta}|_{p_4} \tilde{\omega}^3|_{p_4}) = c (\xi_1 + \xi_2 + \xi_3)(-1+\xi_3)^3.
\end{equation}
Similar results can be obtained for different values of $a$ and $b$, provided that $a+b =m = \text{dim}_{\mathbb{C}}(M_{\xi})$.

\section{Conclusions} \label{conclusions}

In the first part of the paper, we proposed an algorithm on how to explicitly obtain cohomology rings of the Abelian symplectic reduced spaces for Hamiltonian action of maximal torus of the Local Unitary group on the complex projective manifold of pure multi-qubit quantum systems in terms of elementary symmetric functions. Then, by studying the geometry of associated moment polytope, we employed a recursive wall-crossing algorithm to compute cohomological pairings on the corresponding Abelian symplectic quotients for regular values of the torus moment map. The computations were then elaborated by detailed examples for two and three-qubit cases.

From a physical perspective, the Abelianization corresponds to the problem of joint probability distributions of classical marginals with only diagonal entries of local density matrices in a multi-particle quantum system. In \cite{christandl2014}, the authors used two algorithms, namely the Heckman and the single-summand algorithms, to compute both Abelian and non-Abelian Duistermaat-Heckman measures for the same type of Hamiltonian group actions as this paper. However, in this paper we emphasize on the role played by equivariant cohomology to find both the cohomology rings and cohomological pairings over the Abelian symplectic reduced spaces.

Representation-theoretically, quantized version of the results in this paper are related to the multiplicity functions and branching coefficients for compact Lie group representations (for more details, readers can refer to \cite{vergne2015}). However, generalization of the results in this paper to the non-Abelian case require the knowledge of intersection cohomology \cite{goresky1980,goresky1983}, since the resulting non-Abelian symplectic reduced spaces would be stratified symplectic quotients \cite{sjamaar1991}. Such kind of generalizations are subjects of future studies. 

\section{Acknowledgments} \label{acknowledgments}
The author would like to thank Matthias Christandl, Beno\^{i}t Collins and Michael Walter for fruitful discussions.

\printbibliography

\vspace{.2in}
\noindent
Department of Mathematics and Statistics, University of Ottawa, 585 King Edward, Ottawa ON, K1N 6N5, Canada,
\textit{E-mail:} \href{mailto:smollada@uottawa.ca}{\texttt{smollada@uottawa.ca}}

\end{document}